\title{The Semialgebraic Orbit Problem}
\author{Shaull Almagor}{Department of Computer Science, Oxford University, UK}{shaull.almagor@cs.ox.ac.uk}{}{}
\author{Jo\"el Ouaknine}{Max Planck Institute for Software Systems,  
	Germany \and \\Department of Computer Science, Oxford
        University, UK}{joel@mpi-sws.org}{}{Supported by ERC grant
        AVS-ISS (648701), and by the Deutsche Forschungsgemeinschaft
        (DFG, German Research Foundation) --- Projektnummer 389792660
        --- TRR 248}
\author{James Worrell}{Department of Computer Science, Oxford University, UK}{jbw@cs.ox.ac.uk}{}{Supported by EPSRC Fellowship EP/N008197/1}
\authorrunning{S.\,Almagor, J.\,Ouaknine and J.\,Worrell}
\keywords{linear dynamical systems, Orbit Problem, first order theory of the reals}
  \newtheorem{property}[theorem]{Property}
\renewcommand{\phi}{\varphi}
\newcommand{\set}[1]{\left\{#1\right\}}
\newcommand{\NN}{{\mathbbm{N}}}
\newcommand{\ZZ}{{\mathbbm{Z}}}
\newcommand{\RR}{{\mathbbm{R}}}
\newcommand{\CC}{{\mathbbm{C}}}
\newcommand{\QQ}{{\mathbbm{Q}}}
\renewcommand{\AA}{{\mathbbm{A}}}
\newcommand{\B}{{\cal B}}
\newcommand{\I}{{\cal I}}
\newcommand{\OO}{{\cal{O}}}
\newcommand{\co}[1]{\overline{#1}}
\newcommand{\re}{{\rm Re}}
\newcommand{\norm}[1]{{\left\lVert#1\right\rVert}}
\newcommand{\heig}[1]{{H(#1)}}
\newcommand{\diag}{\textrm{diag}}
\newcommand{\PSPACE}{\textbf{PSPACE}}
\newcommand{\DOUBLEEXP}{\textbf{2EXP}}
\begin{document}
\maketitle
\begin{abstract}
The \emph{Semialgebraic Orbit Problem} is a fundamental reachability question that arises in the analysis of discrete-time linear dynamical systems such as automata, Markov chains, recurrence sequences, and linear while loops.  An instance of the problem comprises a dimension $d\in\mathbb{N}$, a square matrix $A\in\mathbb{Q}^{d\times d}$, and semialgebraic source and target sets $S,T\subseteq \mathbb{R}^d$. 	
The question is whether there exists $x\in S$ and $n\in\mathbb{N}$ such that $A^nx \in T$. 	
	
\noindent The main result	of this	paper is that the Semialgebraic Orbit 
        Problem is decidable for dimension $d\leq 3$.  Our decision
        procedure relies on separation bounds for algebraic numbers as
        well as a classical result of transcendental number
        theory---Baker's theorem on linear forms in logarithms of
        algebraic numbers.  We moreover argue that our main result
        represents a natural limit to what can be decided (with
        respect to reachability) about the orbit of a single matrix.  On the one hand, semialgebraic sets are arguably the largest general class	of subsets of $\mathbb{R}^d$ for which membership is decidable.  On the other hand, previous work has shown that in dimension $d=4$, giving a decision procedure for the special case of the Orbit Problem with singleton source set $S$ and polytope target set $T$ would entail major breakthroughs in Diophantine approximation.
	
\end{abstract}
\pagebreak
\section{Introduction}
\label{sec: intro}
This paper concerns decision problems of the following form: given
$d\in\mathbb{N}$, a square matrix $A \in \mathbb{Q}^{d\times d}$, and
respective \emph{source} and \emph{target} sets
$S,T\subseteq\mathbb{R}^d$, does there exist $n\in \mathbb{N}$ and
$x\in S$ such that $A^n x \in T$?  One way to categorise such problems
is according to the types of sets allowed for the source and target
(e.g., polytopes or semialgebraic sets).  We collectively refer to
the various problems that arise in this way as \emph{Orbit Problems}.
Orbit Problems occur naturally in the reachability analysis of
discrete-time linear dynamical systems, including Markov chains,
automata, recurrence sequences, and linear loops in program analysis
(see~\cite{ChonevOW16,kannan1986polynomial,TUCS05} and references therein).

In order to describe the main result of this paper in relation to
existing work, we identify three successively more general types of
Orbit Problems.  In the \emph{point-to-point} version both the
source and target are singletons with rational coordinates; in the
\emph{Polytopic Orbit Problem} the source and target $S$ and $T$ are
polytopes (i.e., sets defined by conjunctions of linear inequalities
with rational coefficients); in the \emph{Semialgebraic Orbit Problem} $S$ and $T$ are semialgebraic sets defined with rational
parameters.

The question of the decidability of the point-to-point Orbit Problem
was raised by Harrison in 1969~\cite{harrison1969lectures}.  The
problem remained open for ten years until it was finally resolved in a
seminal paper of Kannan and Lipton~\cite{kannan1986polynomial}, who in
fact gave a polynomial-time decision procedure.

The Polytopic Orbit Problem is considerably more challenging than the
point-to-point version, and its decidablity seems out of reach for
now.  Indeed the special case in which $S$ is a singleton and $T$ is a
linear subspace of $\mathbb{R}^d$ of dimension $d-1$ is a well-known
decision problem in its own right, called the \emph{Skolem Problem},
whose decidability has been open for many
decades~\cite{tao2008structure}.  In contrast to the point-to-point
case the only positive decidability results for the Polytopic Orbit
Problem are in the case of fixed dimension $d$.  For the Skolem Problem,
decidability is known for $d \leq 4$~\cite{MST84,Ver85}.  In case $S$
and $T$ are allowed to be arbitrary polytopes, decidability is known
in case $d\leq 3$~\cite{AOW17} (see also~\cite{chonev2015polyhedron}).
While Kannan and Lipton's decision procedure in the point-to-point
case mainly relied on algebraic number theory (e.g., separation bounds
between algebraic numbers and prime factorisation of ideals in rings
of algebraic integers), the decision procedures for the Skolem Problem
and the Polytopic Orbit Problem additionally use results about
transcendental numbers (specifically Baker's theorem about linear
forms in logarithms of algebraic numbers).  It was shown
in~\cite{chonev2015polyhedron} that the existence of a decision
procedure for the Polytopic Orbit Problem in dimension $d=4$ would
entail computability of the Diophantine approximation types of a
general class of transcendental numbers (a problem considered
intractable at present).  Not only does this suggest that the use of
transcendental number theory is unavoidable in analysing the Polytopic
Orbit Problem, it also indicates that further progress beyond the case
$d=3$ is contingent upon significant advances in the field of Diophantine
approximation.

In this paper we remain in dimension $d=3$ and consider a
generalisation of previous work by allowing the source and target sets
to be semialgebraic, that is, defined by Boolean combinations of
polynomial equalities and inequalities.  This allows us to handle
three-dimensional source and target sets in much greater geometrical
generality than polytopes.  In applications to program analysis and dynamical systems,
semialgebraic sets are indispensable in formulating sufficiently expressive models  (e.g., to describe
initial conditions and transition guards) and in model analysis (e.g., in synthesising invariants 
and barrier certificates and approximating sets of reachable states)~\cite{Muller-OlmS04,LafferrierePY01}.

The Semialgebraic Orbit Problem could be reduced
to the polytopic case in a fairly straightforward fashion by
increasing the dimension $d$ according to the degree of the
polynomials appearing in the semialgebraic constraints.  However such
a general approach is doomed to failure in view of the obstacles to
obtaining decidability in the polytopic case beyond dimension $3$ and
instead we develop specific techniques for the semialgebraic case that
are considerably more challenging than in the Polytopic Problem.  As
in previous work on the Skolem Problem and on the Polytopic Orbit
Problem, Baker's Theorem plays a crucial role in the present
development.  The main difficulty in generalising from the polytopic
case to the semialgebraic case lies in the delicate analytic arguments
that are required to bring Baker's Theorem to bear. More precisely:
(i)~we need to resort to symbolic quantifier elimination (in lieu of
explicit Fourier-Motzkin elimination, which had been used in the
Polytopic Orbit Problem), since we are now dealing with non-linear
constraints; (ii)~we also need to perform spectral calculations
symbolically, via the use of Vandermonde methods, instead of the
explicit direct approach possible in our earlier work; and (iii)~we replace
triangulation of polytopes by cylindrical algebraic decomposition of
semialgebraic sets into cells, which again necessitates a new symbolic
treatment along with a substantially refined analysis based on Taylor
approximation of the attendant functions.

In summary, this paper provides a decision procedure for the
Orbit Problem in dimension $d=3$ with semialgebraic source and target sets.  
The latter appear to be a natural limit to the
positive decidability results that can be obtained for this problem,
barring major new advances in Diophantine approximation.

At a technical level, our contributions are twofold: in
Section~\ref{sec: semialgebraic hitting} we start by analysing the
case of the Orbit Problem in which $S$ is a singleton and $T$ a
semialgebraic set. We then reduce this problem in Section~\ref{subsec:
  hitting to system} to solving certain systems of
polynomial-exponential equalities and inequalities, and in
Section~\ref{subsec: solving the systems} we show precisely how to solve such
systems. The second technical contribution consists in handling the general
case of the Semialgebraic Orbit Problem, in Section~\ref{sec: semialgebraic collision}. There, we show how to circumvent problems
that arise when quantifying over the set $S$, and arrive at a system
that can ultimately be solved using the techniques and results
developed in Section~\ref{subsec: solving
  the systems}.
\section{Mathematical Tools}
In this section we introduce the key technical tools used in this paper.
\subsection{Algebraic numbers}
\label{sec:algebraic numbers}
For $p\in \ZZ[x]$ a polynomial with integer coefficients, we denote by $\norm{p}$ the bit length of its representation as a list of coefficients encoded in binary. 
Note that the {\em degree} of $p$, denoted $\deg(p)$ is at most $\norm{p}$, and the {\em height} of $p$ --- i.e., the maximum of the absolute values of its coefficients, denoted $\heig{p}$ --- is at most $2^\norm{p}$.

We begin by summarising some basic facts about the field of 
algebraic numbers (denoted $\AA$) and (efficient) arithmetic therein.
The main references include~\cite{basu2005algorithms,cohen2013course, renegar1992computational}.
A complex number $\alpha$ is {\em algebraic} if it is a root of a single-variable polynomial with integer coefficients. The
{\em defining polynomial} of $\alpha$, denoted $p_\alpha$, is the unique polynomial of least degree, and whose coefficients do not have common factors, which vanishes at $\alpha$. The {\em degree} and {\em height} of $\alpha$ are respectively those of $p$, and are denoted $\deg(\alpha)$ and $\heig{\alpha}$. A standard representation\footnote{Note that this representation is not unique.} for algebraic numbers is to encode $\alpha$ as a tuple comprising its defining polynomial
together with rational approximations of its real and imaginary parts of sufficient precision to distinguish $\alpha$ from the other roots of $p_\alpha$. More precisely, $\alpha$ can be represented by $(p_\alpha, a, b, r)\in \ZZ[x]\times \QQ^3$ provided that $\alpha$ is the unique root of $p_\alpha$ inside the circle in $\CC$ of radius $r$ centred at $a + bi$. A  separation bound due to Mignotte~\cite{mignotte1983some} asserts that for roots $\alpha\neq \beta$ of a polynomial $p\in \ZZ[x]$,  we have
\vspace*{-2pt}
\begin{equation}
\label{eq:Mignotte}
|\alpha-\beta|>\frac{\sqrt{6}}{d^{(d+1)/2}H^{d-1}}
\end{equation}
where $d=\deg(p)$ and $H=\heig{p}$. Thus if $r$ is required to be less than a quarter of the root-separation bound, the representation is well-defined and allows for equality checking.
Given a polynomial $p\in \ZZ[x]$, it is well-known how to compute standard representations of each of its roots in time  polynomial in $\norm{p}$~\cite{basu2005algorithms,cohen2013course,pan1996optimal}. Thus given an algebraic number $\alpha$ for which we have (or wish to compute) a
standard representation, we write $\norm{\alpha}$ to denote the bit length of this representation. From now on, when referring to computations on algebraic numbers, we always implicitly refer to their standard representations.

Note that Equation~\eqref{eq:Mignotte} can be used more generally to separate arbitrary algebraic numbers: indeed, two algebraic numbers $\alpha$ and $\beta$ are always roots of the polynomial $p_\alpha p_\beta$ of degree at most $\deg(\alpha)+\deg(\beta)$, and of height at most $\heig{\alpha}\heig{\beta}$. 
Given algebraic numbers $\alpha$ and $\beta$, one can compute
$\alpha+\beta$, $\alpha\beta$, $1/\alpha$ (for $\alpha\neq 0$), $\co{\alpha}$, and $|\alpha|$, all of which are algebraic, in time polynomial in $\norm{\alpha}+\norm{\beta}$. Likewise, it is straightforward to check whether $\alpha=\beta$.
Moreover, if $\alpha\in\RR$, deciding whether $\alpha>0$ can be done in time polynomial in $\norm{\alpha}$. Efficient  algorithms for all these tasks can be found in~\cite{basu2005algorithms,cohen2013course}.

\subsection{First-order theory of the reals}
\label{sec:first order theory}
Let $\vec{x}=x_1,\ldots, x_m$ be a list of $m$ real-valued variables, and let $\sigma(\vec{x})$ be a Boolean combination of atomic predicates of the form $g(\vec{x}) \sim 0$, where each $g(\vec{x})\in \ZZ[x]$ is a polynomial with integer coefficients over these variables, and ${\sim}\in \set{>,=}$. A {\em formula of the first-order theory of the reals} is of the form 
$Q_1x_1 Q_2x_2\cdots Q_mx_m \sigma(\vec{x})$,
where each $Q_i$ is one of the quantifiers $\exists$ or $\forall$. Let us denote the above formula by $\tau$ , and write $\norm{\tau}$ to denote the bit length of its syntactic representation.
Tarski famously showed that the first-order theory of the reals is decidable~\cite{tarski1951decision}. His procedure, however, has non-elementary complexity. Many substantial improvements followed over the years, starting with Collins’s technique of cylindrical algebraic decomposition~\cite{collins1975quantifier}, and culminating with the fine-grained analysis of Renegar~\cite{renegar1992computational}. 
In this paper, we will use the following theorems~\cite{Renegar88, renegar1992computational}.
\begin{theorem}[Renegar~\cite{Renegar88}]
	\label{thm:renegar existential}
	The problem of deciding whether a closed formula $\tau$ of the form above holds over the reals is in \DOUBLEEXP, and in \PSPACE\ if $\tau$ has only existential quantifiers. 
\end{theorem}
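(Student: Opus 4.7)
The plan is to follow Renegar's approach, which rests on two main ideas: reducing decision to sign determination at a bounded family of algebraic sample points, and eliminating quantifier blocks one at a time.

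For the existential case (\PSPACE), consider a formula $\exists x_1 \cdots \exists x_m \, \sigma(\vec{x})$ where $\sigma$ is a Boolean combination of $s$ polynomial sign conditions of degree at most $d$. The key structural fact, established via critical-point methods going back to Grigoriev and Heintz and refined by Renegar, is that if the semialgebraic set defined by $\sigma$ is nonempty then it meets a canonical family of candidate ``sample points'' of cardinality $(sd)^{O(m)}$ --- singly exponential in the input size --- each of whose coordinates is a real algebraic number of degree at most $d^{O(m)}$ describable by a polynomial system of polynomial size. The decision procedure nondeterministically guesses the description of one such candidate and verifies that $\sigma$ holds there. Verification amounts to sign determination of polynomials at an algebraic point, which by classical real root counting (Sturm sequences or Thom encodings) together with the separation bound~\eqref{eq:Mignotte} can be carried out in polynomial space. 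Since NPSPACE $=$ \PSPACE, this yields the existential bound.

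For the general case (\DOUBLEEXP), the strategy is parametric quantifier elimination, eliminating one block $Q_k x_k$ at a time. After eliminating the innermost $m-k$ blocks, the remaining formula is viewed as a quantifier-free description of a semialgebraic set in the surviving variables. Renegar's fine-grained analysis shows that a single round of elimination over a block of variables causes only a singly exponential blow-up in the number of polynomials and in their degrees; iterating over the $\omega$ quantifier alternations yields a bound of the form $(sd)^{m^{O(\omega)}}$, which is doubly exponential in $\norm{\tau}$.

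The main obstacle is proving the sample-point theorem and realising it within \PSPACE, since one cannot afford to write down the entire family of $(sd)^{O(m)}$ candidate points simultaneously. The remedy is to exploit alternation (equivalently, nondeterminism with space reuse): guess a symbolic description of a single sample point via its defining polynomial system and Thom-style certifying data, and then perform the required sign determinations on the fly with Sturm-like subroutines whose space usage is polynomial in the degree, height, and bit-length of the algebraic data involved. Matching these complexity bounds against the Mignotte-style separation estimates to rule out precision blow-up during the intermediate computations is the technical heart of the argument.
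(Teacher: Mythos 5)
This statement is cited in the paper as a theorem of Renegar~\cite{Renegar88} and is not proved there, so there is no in-paper proof to compare against. Your sketch is a reasonable reconstruction of the standard modern account: for the \PSPACE\ case, a singly exponential family of sample points realizing every consistent sign condition, nondeterministic guessing of a compact symbolic description of one such point, sign verification, and then Savitch's theorem to collapse nondeterministic polynomial space to \PSPACE; for the \DOUBLEEXP\ bound, block-by-block quantifier elimination with singly exponential blow-up per block.

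One place where the sketch as written would not hold up as a full argument is the claim that verification runs in ``space polynomial in the degree, height, and bit-length of the algebraic data involved.'' The defining polynomials of a sample point have degree $d^{O(m)}$, which is exponential in $\norm{\tau}$, so space polynomial in that degree is exponential, not polynomial. What is actually required is space polynomial in $\norm{\tau}$ itself. Achieving this means never expanding the algebraic data into explicit high-degree univariate polynomials: the sample point must be kept implicit via a polynomial-size critical-point system, and the sign tests must be carried out through resultant and subresultant computations organized so that the workspace stays $\norm{\tau}^{O(1)}$ even though the underlying algebraic degrees are exponential. You correctly flag this step as the technical heart, but the quantity over which ``polynomial space'' is measured must be the input size, not the degree of the intermediate algebraic objects; as phrased, your bound conflates the two and would only yield an exponential-space algorithm.
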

\begin{theorem}[Renegar~\cite{renegar1992computational}]
	\label{thm:renegar quantifier elimination}
	There is an algorithm that, given a formula $\tau(x_1,\ldots,x_m)$ where $x_1,\ldots,x_m$ are free variables, computes an equivalent quantifier-free formula in \emph{disjunctive normal form} (DNF) $\Phi(x_1,\ldots,x_m)=\bigvee_{I}\bigwedge_{J} R_{I,J}(x_1,\ldots,x_m)\sim_{I,J}0$ where $R_{I,J}$ is a polynomial \footnote{Technically, the indices should be $I,J_I$, but we omit the dependency of $J$ on $I$ for simplicity.} and $\sim_{I,J}\in \set{>,=}$. 
	Moreover, the algorithm runs in time $2^{2^{\OO(\norm{\tau})}}$, and in particular, $\norm{\Phi}=2^{2^{\OO(\norm{\tau})}}$.
\end{theorem}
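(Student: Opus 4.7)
The statement is a celebrated result of Renegar and we take it as a black box, but here is a sketch of how the proof goes. The strategy is to eliminate quantifier blocks one at a time, innermost first. After negating alternate blocks it suffices to handle a single existential block $\exists x_{k+1} \cdots \exists x_m\, \sigma(\vec{x})$ in which $\sigma$ is a Boolean combination of atomic predicates on polynomials $f_1,\ldots,f_s$, with $x_1,\ldots,x_k$ treated as symbolic parameters. The truth value of the block at a parameter point is determined solely by which sign pattern of $(f_1,\ldots,f_s)$ in the inner coordinates is realised by some witness. So it suffices to produce, symbolically in the parameters, a finite collection of candidate points that hit every connected component of every realised sign cell, and to evaluate $\sigma$ on each.

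The engine that produces such sample points is the critical points method. One augments the system with a generic distance function and studies its critical points on the real varieties cut out by the $f_i$. These critical points are zeros of square polynomial systems in the inner variables whose coefficients are polynomials in the outer parameters; eliminating the inner variables by resultants or subresultants yields polynomials in $x_1,\ldots,x_k$ whose signs exactly characterise the truth of the block. Taking the disjunction over all sample systems and all inner sign patterns on which $\sigma$ holds gives a quantifier-free formula in DNF of the required form.

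The main obstacle is not correctness but the doubly-exponential bookkeeping. A single existential block takes polynomials of degree $d$ in $m$ variables to polynomials of degree $d^{O(m)}$ and multiplies the number of polynomials by a singly-exponential factor in $m$. Renegar's refined analysis packs the sign-determination for all $s$ polynomials into one stage per block, rather than naively iterating per polynomial, which keeps the blowup under control; after eliminating all quantifier blocks both the running time and the size of the output formula are bounded by $2^{2^{\OO(\norm{\tau})}}$. The DNF form with atomic predicates $R_{I,J} \sim_{I,J} 0$ involving only $>$ and $=$ is immediate from the construction, as each disjunct corresponds to a sign cell where $\sigma$ holds and each conjunct records the sign of one critical-point polynomial in the parameters.
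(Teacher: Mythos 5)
The paper cites this as a known theorem of Renegar and offers no proof of its own, so there is nothing internal to compare against. Your sketch is a correct high-level account: block-by-block elimination reduced to a single existential block, symbolic sign determination via sample points hitting every realised sign cell, elimination of the inner variables by resultant-type constructions, and the crucial observation that handling all $s$ polynomials' signs in one pass per block (rather than iterating polynomial by polynomial) keeps the overall blowup bounded by $2^{2^{\OO(\norm{\tau})}}$. One attribution nuance worth flagging: the critical-points-on-real-varieties picture you invoke is closer in spirit to the Grigoriev--Vorobjov and Basu--Pollack--Roy presentations; Renegar's own development runs through multivariate $u$-resultants, univariate representations, and infinitesimal perturbations, though the two lines of argument are essentially equivalent in effect and complexity. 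The DNF form over $\set{>,=}$ does indeed fall out of the construction, since each disjunct records a realisable sign condition on the eliminating polynomials, modulo the harmless rewriting of $g<0$ as $-g>0$.
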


A set $S\subseteq \RR^d$ is \emph{semialgebraic} if there exists a formula $\Phi(x_1,\ldots,x_d)$ in the first-order theory of the reals with free variables $x_1,\ldots,x_d$ such that $S=\set{(c_1,\ldots,c_d): \Phi(c_1,\ldots,c_d) \mbox{ is true}}$. 

We remark that algebraic constants can also be incorporated as coefficients in the first-order theory of the reals (and in particular, in the definition of semialgebraic sets), as follows. Consider a polynomial $g(x_1,\ldots,x_m)$ with algebraic coefficients $c_1,\ldots, c_k$. We replace every $c_j$ with a new, existentially-quantified variable $y_j$, and add to the sentence the predicates $p_{c_j}(y_j)=0$ and $(y_j-(a+bi))^2< r^2$, where $(p_{c_j},a,b,r)$ is the representation of $c_j$. Then, in any evaluation of this formula to True, it must hold that $y_j$ is assigned value $c_j$.

\section{Almost Self-Conjugate Systems of Inequalities}
\label{sec: systems}
\label{sec: semialgebraic hitting}
In this section we lay the groundwork for solving the Semialgebraic Orbit Problem. We do so by initially treating the case where the set $S$ of initial points is a singleton.
\subsection{Analysis of the Point-to-Semialgebraic Orbit Problem}
\label{subsec: hitting to system}
The \emph{point-to-semialgebraic Orbit Problem} is 
to decide, given a matrix $A\in \QQ^{3\times 3}$, an initial point $s\in \QQ^3$ and a semialgebraic target $T\subseteq \RR^3$, whether there exists $n\in \NN$ such that $A^n s\in T$.

By Theorem~\ref{thm:renegar quantifier elimination}, we can compute a quantifier-free representation of $T$. That is, we can write $T=\{(x,y,z):\bigvee_{I}\bigwedge_{J}R_{I,J}(x,y,z)\sim_{I,J} 0\}$ where $R_{I,J}$ are polynomials with integer coefficients, and $\sim_{I,J}\in \set{>,=}$. 
For the purpose of solving the point-to-semialgebraic Orbit Problem, we note that it is enough to consider each disjunct separately. Thus, we can assume  $T=\set{(x,y,z):\bigwedge_{J}R_J(x,y,z)\sim_J 0}$, and it remains to decide whether there exists $n\in \NN$ such that $\bigwedge_J R_J(A^ns)\sim_J 0$. 

Note that, as per Theorem~\ref{thm:renegar quantifier elimination}, we have that $\norm{R_J}=2^{2^{\OO(\norm{T})}}$. Moreover, the number of terms in the DNF formula above can itself be doubly-exponential in $\norm{T}$. Complexity wise, this is the most expensive part of our algorithm.

Consider the eigenvalues of $A$. Since $A$ is a $3\times 3$ matrix, then either it has only real eigenvalues, or it has one real eigenvalue and two conjugate complex eigenvalues. In particular, if $A$ has complex eigenvalues, then it is diagonalisable. 

The case where $A$ has only real eigenvalues is treated in Appendix~\ref{apx:real eigen} for the general case of the Semialgebraic Orbit Problem, and is considerably simpler. 

Henceforth, we assume $A$ has complex eigenvalues, so that $A=PDP^{-1}$ with 
$D=\begin{pmatrix}
\lambda & 0 &0 \\
0& \co{\lambda} &0\\
0&0 & \rho
\end{pmatrix}$, where $\lambda$ is a complex eigenvalue, $\rho\in \RR$, and $P$ an invertible matrix. 

Observe that $A^n=PD^nP^{-1}$. By carefully analysing the structure of $P$, it is not hard to show that $A^ns=\begin{pmatrix}
a_1 \lambda^n+\co{a_1}\co{\lambda}^n+b_1 \rho^n\\
a_2 \lambda^n+\co{a_2}\co{\lambda}^n+b_2 \rho^n\\
a_3 \lambda^n+\co{a_3}\co{\lambda}^n+b_3 \rho^n
\end{pmatrix}$
where the $a_i$ and $b_i$ are algebraic and the $b_i$ are also real (see Appendix~\ref{apx: similarity matrix} for a detailed analysis). 

Thus, we want to decide whether there exists $n\in \NN$ such that $R_J(a_1 \lambda^n+\co{a_1}\co{\lambda}^n+b_1 \rho^n,
a_2 \lambda^n+\co{a_2}\co{\lambda}^n+b_2 \rho^n,
a_3 \lambda^n+\co{a_3}\co{\lambda}^n+b_3 \rho^n)\sim_J 0$ for every $J$. Since $R_J$ is a polynomial, then by aggregating coefficients we can write
\begin{equation}
\label{eq: poly expression}
R_J(A^ns)=\sum_{0\le p_1,p_2,p_3\le k} \alpha_{p_1,p_2,p_3} \lambda^{n p_1}\co{\lambda}^{n p_2}\rho^{n p_3}+\co{\alpha_{p_1,p_2,p_3}} \co{\lambda}^{n p_1}\lambda^{n p_2}\rho^{n p_3}
\end{equation}
for some $k\in \NN$. Note that we treat the (real) coefficients of $\rho$ as a sum of complex conjugate coefficients, but this can easily be achieved by writing e.g., $c\rho^{n p}=\frac{c}{2}\rho^{n p}+\frac{c}{2}\rho^{n p}$. 

We notice that for every $J$, the polynomial $R_J(A^ns)$, consists of conjugate summands. More precisely, $R_J(A^n s)$, when viewed as a polynomial in $\lambda^n,\co{\lambda}^n,$ and $\rho^n$, has the following property.
\begin{property}[Almost Self-Conjugate Polynomial]
A complex polynomial $Q(z_1,z_2,z_3)$ over $\CC^3$ is \emph{almost self-conjugate} if
\[Q(z_1,z_2,z_3)=\sum_{0\le t_1,t_2,t_3\le \ell} \delta_{t_1,t_2,t_3}z_1^{t_1}z_2^{t_2}z_3^{t_3}+\co{\delta_{t_1,t_2,t_3}}z_2^{t_1}z_1^{t_2}z_3^{t_3}.\]
That is, if $z_2=\co{z_1}$ and $z_3$ is a real variable, then the monomials in $Q$ appear in conjugate pairs with conjugate coefficients.
\end{property}

We refer to the conjunction $\bigwedge_{J}R_J(A^ns)\sim_J 0$ as an \emph{almost self-conjugate system}. It remains to show that we can decide whether there exists $n\in \NN$ that solves the system.

\subsection{Solving Almost Self-Conjugate Systems}
\label{subsec: solving the systems}
Our starting point is now an almost self-conjugate system as described above. In the following, we will consider a single conjunct $R_J(A^n s)\sim_J 0$.

We start by normalising the expression $R_J(A^ns)\sim_J 0$ in the form of~\eqref{eq: poly expression}, as follows. Let $\Lambda=\max\set{|\lambda^{p_1}\co{\lambda}^{p_2}\rho^{p_3}|: \alpha_{p_1,p_2,p_3}\neq \emptyset}$, we divide the expression in~\eqref{eq: poly expression} by $\Lambda^n$, and get that $R_J(A^ns)\sim_J 0$ iff
\begin{equation}
\label{eq: normalised}
\sum_{m=0}^{k} \beta_m \gamma^{n m}+\co{\beta_m}\co{\gamma}^{n m} + r(n)\sim_J 0
\end{equation}
where the $\beta_m$ are algebraic coefficients, $\gamma=\frac{\lambda}{|\lambda|}$ satisfies $|\gamma|=1$ and $r(n)=\sum_{l=1}^{k'} \chi_l \mu^n_{l}+\co{\chi_l}\co{\mu_l}^n$ with $\chi_l$ being algebraic coefficients, and $|\mu_l|<1$ for every $1\le l\le k'$. Moreover, every $\mu_l$ is a quotient of two elements of the form $\lambda^{p_1}\co{\lambda}^{p_2}\rho^{p_3}$, and thus, by Section~\ref{sec:algebraic numbers}, $\deg(\mu_l)=\norm{R_J}^{\OO(1)}$ and $H(\mu_l)=2^{\norm{R_J}^{\OO(1)}}$. 
Note that for simplicity, we reuse the number $k$, although it may differ from $k$ in~\eqref{eq: poly expression}.
We refer to Equation~\eqref{eq: normalised} as the \emph{normalised expression}.

In the following, we assume that at least one of the $\beta_j$ is nonzero for $j\ge 1$. Indeed, otherwise we can recast our analysis on $r(n)$, which is of lower order.

We now split our analysis according to whether or not $\gamma$ is a root of unity. That is, whether $\gamma^d=1$ for some $d\in \NN$.
\subsubsection{The case where $\gamma$ is a root of unity}
\label{sec:root of unity}
Suppose that $\gamma$ is a root of unity. Then, the set $\set{\gamma^n: n\in \NN}$ is a finite set $\set{\gamma^0,\ldots,\gamma^{d-1}}$. Thus, by splitting the analysis of $A^ns$ according to $n\!\!\mod d$, we can reduce the problem to $d$ instances which involve only real numbers. In Appendix~\ref{apx:root of unity} we detail how to handle this case, and comment on its complexity.

\subsubsection{The case where $\gamma$ is not a root of unity}
When $\gamma$ is not a root of unity, the set $\set{\gamma^n:n\in \NN}$ is dense in the unit circle. With this motivation in mind, we define, for a normalised expression, its {\em dominant function} $f:\CC\to\RR$ as 
$f(z)=\sum_{m=0}^{k} \beta_m z^m+\co{\beta_m}\co{z}^{m}$. Observe that \eqref{eq: normalised} is now equivalent to $f(\gamma^n)+r(n)\sim_J 0$.

Our main technical tool in handling~\eqref{eq: normalised} is the following lemma.

\begin{lemma}
\label{lem:main lemma}
	Consider a normalised expression as in \eqref{eq: normalised}. Let $\norm{\I}$ be its encoding length, and let $f$ be its dominant function.
	Then there exists $N\in \NN$ computable in polynomial time in $\norm{\I}$ with $N=2^{\norm{\I}^{\OO(1)}}$
	such that for every $n>N$ it holds that 
	\begin{enumerate}
		\setlength\itemsep{-3pt}
		\item $f(\gamma^n)\neq 0$,
		\item $f(\gamma^n)>0$ iff $f(\gamma^n)+r(n)>0$,
		\item $f(\gamma^n)<0$ iff $f(\gamma^n)+r(n)<0$.
	\end{enumerate}
\end{lemma}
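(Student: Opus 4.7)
The plan is to exhibit $N$ by pitting an asymptotic lower bound on $|f(\gamma^n)|$ of at most polynomial decay against the exponential decay of $r(n)$. Since $|\mu_l| < 1$ and $|\mu_l|^2$ is algebraic with an encoding of length $\norm{\I}^{\OO(1)}$, Mignotte's separation bound~\eqref{eq:Mignotte} yields $1 - \max_l |\mu_l| \geq 2^{-\norm{\I}^{\OO(1)}}$, and therefore $|r(n)| \leq K \mu^n$ for $K = \sum_l 2|\chi_l|$ and $\mu = \max_l|\mu_l|$, both of polynomial-length encoding. The task thus reduces to producing a lower bound of the form $|f(\gamma^n)| \geq c \cdot n^{-C}$ with $c \geq 2^{-\norm{\I}^{\OO(1)}}$ and $C = \norm{\I}^{\OO(1)}$; solving $K\mu^n < c \cdot n^{-C}$ then produces $N = 2^{\norm{\I}^{\OO(1)}}$, and the resulting inequality $|r(n)| < |f(\gamma^n)|$ delivers all three conclusions at once, forcing in particular $f(\gamma^n) \neq 0$.

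To analyse $f$ on the unit circle, I would use $\bar z = 1/z$ there to write $z^k f(z) = Q(z)$, where $Q$ is a polynomial of degree at most $2k$ with algebraic coefficients. The standing assumption that some $\beta_j \neq 0$ for $j \geq 1$ ensures that $Q$ is not identically zero, so $f$ vanishes at only finitely many points $\zeta_1, \ldots, \zeta_s$ of the unit circle; each is algebraic with $\norm{\zeta_i} = \norm{\I}^{\OO(1)}$, and its multiplicity $e_i$ as a zero of $Q$ is computable. A uniform Taylor estimate then yields a computable radius $\eta$ with $-\log \eta = \norm{\I}^{\OO(1)}$ such that $|f(z)| \geq \tfrac12|c_i|\,|z - \zeta_i|^{e_i}$ whenever $|z-\zeta_i| \leq \eta$, where $c_i$ is the leading Taylor coefficient at $\zeta_i$. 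Away from the union of these balls, $|f|$ is continuous and nonzero on a compact set, hence bounded below by some $\delta \geq 2^{-\norm{\I}^{\OO(1)}}$.

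It remains to bound $|\gamma^n - \zeta_i|$ from below when $\gamma^n$ lies within $\eta$ of $\zeta_i$. Writing $\gamma = e^{i\theta}$ and $\zeta_i = e^{i\phi_i}$ and letting $k$ be the nearest integer to $(n\theta - \phi_i)/(2\pi)$, the quantity $|\gamma^n - \zeta_i|$ is comparable to $|n\theta - \phi_i - 2\pi k| = |n \log \gamma - \log \zeta_i - 2k\log(-1)|$, a linear form in logarithms of the algebraic numbers $\gamma, \zeta_i, -1$. Provided this form is nonzero, Baker's theorem delivers a lower bound of $n^{-C}$ with $C = \norm{\I}^{\OO(1)}$; the form vanishes only if $\gamma^n = \zeta_i$, which (since $\gamma$ is not a root of unity) occurs for at most one $n$ per index $i$, and these finitely many exceptional indices can be computed explicitly and absorbed into $N$. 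Combining with the Taylor estimate yields $|f(\gamma^n)| \geq c' n^{-C \cdot e_{\max}}$ beyond the bad indices, and solving $K\mu^n < c' n^{-C \cdot e_{\max}}$ determines the required $N$.

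The main obstacle is the quantitative coordination around the zeros of $f$: one must simultaneously control the Taylor remainder of $f$ up to order $e_i$ on the discs of radius $\eta$, bound $|f|$ from below on the complementary compact region, and track the heights and degrees of the $\zeta_i$ carefully enough that Baker's theorem yields an exponent $C$ that is polynomial in $\norm{\I}$. Each ingredient is individually standard, but threading them together while preserving the bound $N = 2^{\norm{\I}^{\OO(1)}}$ demands meticulous bookkeeping of the algebraic and analytic data.
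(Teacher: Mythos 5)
Your proof follows the paper's strategy closely: upper-bound $|r(n)|$ by an exponentially decaying term via Mignotte's separation bound applied to $1-|\mu_l|$, lower-bound $|f(\gamma^n)|$ by an inverse polynomial using the Baker--W\"ustholz theorem together with a Taylor-type estimate near the finitely many zeros of $f$ on the unit circle, and take $N$ past the crossover. The paper invokes the Baker-type bound through a packaged lemma from~\cite{ouaknine2014ultimate} rather than writing out the linear form in logarithms, but that is cosmetic. Where you genuinely diverge---and improve---is in establishing finiteness of the zero set and bounding multiplicities: the paper's Proposition~\ref{prop: derivatives} uses a Vandermonde-minor argument on the trigonometric polynomial $g(x)=f(e^{ix})$ to show some derivative of order $\le 4k$ is nonzero, then appeals to Bezout; your substitution $\co{z}=1/z$ on $|z|=1$, turning $z^k f(z)$ into a bona fide polynomial $Q$ of degree at most $2k$ whose coefficient in degree $k+j$ is $\beta_j$, gets both conclusions in one stroke ($Q\not\equiv 0$ from $\beta_j\neq 0$, at most $2k$ zeros each of multiplicity $\le 2k$), and lets you do the Taylor analysis holomorphically on $Q$ rather than real-analytically on $g$. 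Two caveats worth making explicit: since $f$ itself is not holomorphic, the Taylor remainder you control must be that of $Q$, with the estimate on $|f|$ recovered via $|f(z)|=|Q(z)|$ on $|z|=1$ only; and the claim $\norm{\zeta_i}=\norm{\I}^{\OO(1)}$ needs an argument, because the coefficients of $Q$ are algebraic but not rational---the paper handles this in Remark~\ref{rmk: algebraic coefficients in f} and Appendix~\ref{apx: bounds on Zf} via Renegar's fixed-variable-count quantifier elimination. With those points made precise, the remaining work you flag really is bookkeeping, and your route is sound.
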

In particular, the lemma implies that if $f(\gamma^n)+r(n)=0$, then $n\le N$. That is, if $\sim_J$ is ``$=$'', then there is a bound on $n$ that solves the system.

\begin{remark}
	In the formulation of Lemma~\ref{lem:main lemma}, we measure the complexity with respect to $\norm{\I}$. However, recall that when the input is $T$, we actually have $\norm{\I}=2^{2^{\OO(\norm{T})}}$. The analysis in Lemma~\ref{lem:main lemma} thus allows us to separate the blowup required for analysing the semialgebraic target from our algorithmic contribution. In particular, when the target has bounded description length, we can obtain better complexity bounds.
\end{remark}

We prove Lemma~\ref{lem:main lemma} in the remainder of this section.

Since $\set{\gamma^n:n\in \NN}$ is dense on the unit circle, our interest in $f$ is also about the unit circle. 
By identifying $\CC$ with $\RR^2$, we can think of $f$ as a function of two real variables. In this view, $f(x,y)$ is a polynomial with algebraic coefficients, and we can therefore compute a description of the algebraic set $Z_f=\set{(x,y):f(x,y)=0 \wedge x^2+y^2=1}$. We start by showing that this set is finite. Define $g:(-\pi,\pi]\to \RR$ by $g(x)=f(e^{ix})$. Explicitly, we have $g(x)=\sum_{m=0}^{k} 2|\beta_m|\cos(mx+\theta_m)$ where $\theta_m=\arg(\beta_m)$. Clearly there is a one-to-one correspondence between $Z_f$ and the roots of $g$.

We present the following proposition, which will be reused later in the proof.
\begin{proposition}
	\label{prop: derivatives}
	For every $x\in (-\pi,\pi]$ there exists $1\le i\le 4 k$ such that $g^{(i)}(x)\neq 0$, where $g^{(i)}$  is the $i$-th derivative of $g$.
\end{proposition}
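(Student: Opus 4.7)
The plan is to argue by contraposition: I suppose that $g^{(i)}(x)=0$ for every $i\in\{1,\ldots,4k\}$ at some fixed $x\in(-\pi,\pi]$ and derive that $\beta_m=0$ for every $m\ge 1$, contradicting the standing assumption (stated just before Section~\ref{sec:root of unity}) that at least one $\beta_j$ with $j\ge 1$ is non-zero. The engine of the argument is a Vandermonde interpolation applied to the complex-exponential expansion of $g$.

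First I would rewrite $g$ in exponential form as
\[
 g(x)=2\re(\beta_0)+\sum_{m=1}^{k}\bigl(\beta_m e^{imx}+\co{\beta_m}e^{-imx}\bigr),
\]
so that, for $i\ge 1$,
\[
 g^{(i)}(x)=\sum_{m=1}^{k}\bigl((im)^i\beta_m e^{imx}+(-im)^i\co{\beta_m}e^{-imx}\bigr).
\]
Setting $u_m:=\beta_m e^{imx}$ (so that $\co{u_m}=\co{\beta_m}e^{-imx}$) and absorbing the common factor $i^{\,i}$, the hypothesis $g^{(i)}(x)=0$ becomes $\sum_{m=1}^{k}m^i\bigl(u_m+(-1)^i\co{u_m}\bigr)=0$. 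Splitting by the parity of $i$, the odd orders give $\sum_{m=1}^{k}m^{i}\im(u_m)=0$ and the even orders give $\sum_{m=1}^{k}m^{i}\re(u_m)=0$.

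Next I would plug in the $k$ odd values $i=1,3,\ldots,2k-1$ (all comfortably inside $[1,4k]$). The resulting $k\times k$ coefficient matrix $M^{\mathrm o}$ has entries $M^{\mathrm o}_{ij}=j^{2i-1}=j\cdot(j^2)^{i-1}$, so it factors as a Vandermonde matrix in the pairwise distinct nodes $1^2,2^2,\ldots,k^2$ right-multiplied by $\diag(1,2,\ldots,k)$; both factors are invertible, forcing $\im(u_m)=0$ for every $m$. Applying the identical argument to the $k$ even values $i=2,4,\ldots,2k$ yields $\re(u_m)=0$. Hence $u_m=0$, equivalently $\beta_m=0$, for every $1\le m\le k$, which is the desired contradiction.

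The only delicate point is arranging the linear system so that an honest Vandermonde appears: the indices $\pm m$ that arise naturally from the two complex exponentials would otherwise collide, and the trick is precisely to separate real and imaginary parts by the parity of $i$, which replaces the nodes $\pm m$ by the distinct nodes $m^2$. The bound $4k$ in the statement is deliberately loose — orders up to $2k$ already suffice for the above argument — so there is no hidden accounting issue; the slack simply leaves room for the looser estimates used in the subsequent quantitative applications of the proposition.
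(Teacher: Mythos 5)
Your proof is correct, and while it rests on the same core idea as the paper's (that a vanishing run of derivatives forces, via a Vandermonde-type non-singularity argument, all the $\beta_m$ with $m\ge 1$ to vanish), the execution differs in a way worth noting. The paper works in the trigonometric form $g(x)=\sum_m 2|\beta_m|\cos(mx+\theta_m)$, splits the derivative index $i$ \emph{modulo four} to get four sign-homogeneous $k\times k$ systems, and then invokes the non-singularity of minors of (generalized) Vandermonde matrices, citing Gantmacher. You instead work in the complex-exponential form, factor out the common power $\iota^i$ of the imaginary unit, and split $i$ only \emph{modulo two}; the resulting matrices $\bigl(m^{2\ell-1}\bigr)_{\ell,m}$ and $\bigl(m^{2\ell}\bigr)_{\ell,m}$ are then explicitly exhibited as (Vandermonde in the nodes $1^2,\ldots,k^2$) times a diagonal matrix, so their invertibility is immediate and self-contained. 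This is exactly the sharpening the paper flags in a footnote (``by splitting modulo 2, we could actually improve the bound from $4k$ to $2k$, but this further complicates the proof''); your exponential-form reformulation shows that the modulo-two split need not be more complicated, and yields the stronger bound $2k$ for free. Two minor stylistic cautions: you overload the symbol $i$ as both the imaginary unit and the derivative index (writing $(im)^i$ and ``the common factor $i^{\,i}$''), which is technically unambiguous but worth disambiguating with a separate symbol; and your remark about the ``colliding'' nodes $\pm m$ is a bit loose --- the nodes $\iota m$ and $-\iota m$ are distinct as complex numbers, the real issue being that $u_m$ and $\co{u_m}$ are not independent unknowns, which is precisely what the real/imaginary-part split resolves.
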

%
%
\begin{proof}
Assume by way of contradiction that $g'(x)=\ldots=g^{4k}(x)=0$.
For every $1\le i\le 4k$ we have that 
\[g^{(i)}(x)=\begin{cases} 
\sum_{m=1}^k m^i 2|\beta_m| \cos(mx+\theta_m)& i\equiv_4 0\\
\sum_{m=1}^k -m^i 2|\beta_m| \sin(mx+\theta_m)& i\equiv_4 1\\
\sum_{m=1}^k -m^i 2|\beta_m| \cos(mx+\theta_m)& i\equiv_4 2\\
\sum_{m=1}^k m^i 2|\beta_m| \sin(mx+\theta_m)& i\equiv_4 3
\end{cases}\]
(note that the summand that corresponds to $m=0$ is dropped out in the derivative, as it is constant).

Splitting according $i\mod 4$, we rewrite the equations $g^{(i)}(x)=0$ in matrix form as follows.\footnote{By splitting modulo 2, we could actually improve the bound in the proposition from $4k$ to $2k$, but this further complicates the proof.} 
\begin{align*}
&\mbox{for }i\equiv_4 0:\hspace*{1cm}\begin{pmatrix}
1^4 & 2^4 & \cdots & k^4\\
1^8 & 2^8 & \cdots & k^8\\
\vdots &\vdots& \vdots& \vdots\\
1^{4k} & 2^{4k} & \cdots & k^{4k}
\end{pmatrix}\begin{pmatrix}
2|\beta_1|\cos(x+\theta_1)\\
2|\beta_2|\cos(2x+\theta_2)\\
\vdots\\
2|\beta_k|\cos(kx+\theta_k)
\end{pmatrix}=\begin{pmatrix}
0\\
0\\
\vdots\\
0
\end{pmatrix} 
\end{align*}
\begin{align*}
&\mbox{for }i\equiv_4 1:\hspace*{1cm}\begin{pmatrix}
-1^1 & -2^1 & \cdots & -k^1\\
-1^5 & -2^5 & \cdots & -k^5\\
\vdots &\vdots& \vdots& \vdots\\
-1^{4k-3} & -2^{4k-3} & \cdots & -k^{4k-3}
\end{pmatrix}\begin{pmatrix}
2|\beta_1|\sin(x+\theta_1)\\
2|\beta_2|\sin(2x+\theta_2)\\
\vdots\\
2|\beta_k|\sin(kx+\theta_k)
\end{pmatrix}=\begin{pmatrix}
0\\
0\\
\vdots\\
0
\end{pmatrix}\\ 
&\mbox{for }i\equiv_4 2:\hspace*{1cm}\begin{pmatrix}
-1^2 & -2^2 & \cdots & -k^2\\
-1^6 & -2^6 & \cdots & -k^6\\
\vdots &\vdots& \vdots& \vdots\\
-1^{4k-2} & -2^{4k-2} & \cdots & -k^{4k-2}
\end{pmatrix}\begin{pmatrix}
2|\beta_1|\cos(x+\theta_1)\\
2|\beta_2|\cos(2x+\theta_2)\\
\vdots\\
2|\beta_k|\cos(kx+\theta_k)
\end{pmatrix}=\begin{pmatrix}
0\\
0\\
\vdots\\
0
\end{pmatrix}\\
&\mbox{for }i\equiv_4 3:\hspace*{1cm}\begin{pmatrix}
1^3 & 2^3 & \cdots & k^3\\
1^7 & 2^7 & \cdots & k^7\\
\vdots &\vdots& \vdots& \vdots\\
1^{4k-1} & 2^{4k-1} & \cdots & k^{4k-1}
\end{pmatrix}\begin{pmatrix}
2|\beta_1|\sin(x+\theta_1)\\
2|\beta_2|\sin(2x+\theta_2)\\
\vdots\\
2|\beta_k|\sin(kx+\theta_k)
\end{pmatrix}=\begin{pmatrix}
0\\
0\\
\vdots\\
0
\end{pmatrix}
\end{align*}

Observe that the matrices we obtain are minors of Vandermonde Matrices (up to their sign), and as such are non-singular~\cite{gantmacher1959theory}. It follows that 
\[\begin{pmatrix}
2|\beta_1|\sin(x+\theta_1)\\
2|\beta_2|\sin(2x+\theta_2)\\
\vdots\\
2|\beta_k|\sin(kx+\theta_k)
\end{pmatrix}=\begin{pmatrix}
0\\
0\\
\vdots\\
0
\end{pmatrix} \mbox{ and } \begin{pmatrix}
	2|\beta_1|\cos(x+\theta_1)\\
	2|\beta_2|\cos(2x+\theta_2)\\
	\vdots\\
	2|\beta_k|\cos(kx+\theta_k)
\end{pmatrix}=\begin{pmatrix}
	0\\
	0\\
	\vdots\\
	0
\end{pmatrix}\]
Recall that we assume at least one $\beta_j$ is nonzero for some $1\le j\le k$, so we have $\cos(jx+\theta_j)=\sin(jx+\theta_j)=0$, which is clearly a contradiction. We thus conclude the proof.
\end{proof}

By Proposition~\ref{prop: derivatives}, it follows that $g$ is not constant, and therefore $f(x,y)$ is not constant on the curve $x^2+y^2=1$. By Bezout's Theorem, we have that $Z_f$ is finite, and consists of at most $4k$ points. Moreover, $f$ is a semialgebraic function (that is, its graph $\set{(x,y,f(x,y)): x,y\in \RR}$ is semialgebraic set in $\RR^3$). Thus, the points in $Z_f$ have semialgebraic coordinates, and we can compute them. By identifying $\RR^2$ with $\CC$, denote $Z_f=\set{z_1,\ldots,z_{4k}}$. 

\begin{remark}
	\label{rmk: algebraic coefficients in f}
	Since the polynomial $f$ has algebraic coefficients, it is not immediately clear how the degree and height of the points in $Z_f$ relate to $\norm{f}$. However, recall that the algebraic coefficients in $f$ are polynomials in the entries of $A^ns$, which are, in turn, algebraic numbers of degree at most $3$ whose description is polynomial in that of $A$ and $s$.
	
	Thus, we can define $Z_f$ with a formula in the first-order theory of the reals with a fixed number of variables. Using results of Renegar~\cite{renegar1992computational}, we show in Appendix~\ref{apx: bounds on Zf} that the points in $Z_f$ have semialgebraic coordinates with description length polynomial in $\norm{f}$.
\end{remark}

We now employ the following lemma from~\cite{ouaknine2014ultimate}, which is itself a consequence of the Baker-W\"ustholz Theorem~\cite{baker1993logarithmic}.
\begin{lemma}[\cite{ouaknine2014ultimate}]
	\label{lem:Baker on unit circle}
	There exists $D\in \NN$ such that for all algebraic numbers $\zeta,\xi$ of modulus 1, and for every $n\ge 2$, if $\zeta^n\neq \xi$, then $|\zeta^n-\xi|>\frac{1}{n^{(\norm{\zeta}+\norm{\xi})^D}}$.
\end{lemma}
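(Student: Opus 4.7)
The plan is to reduce the geometric inequality to a lower bound on a nonzero $\ZZ$-linear form in logarithms of algebraic numbers, then invoke the Baker--W\"ustholz theorem and carefully track how its constants depend on $\norm{\zeta} + \norm{\xi}$. Concretely, I begin by parametrising $\zeta = e^{i\theta}$ and $\xi = e^{i\phi}$ with $\theta,\phi \in (-\pi, \pi]$. A direct computation gives $|\zeta^n - \xi|^2 = 2 - 2\cos(n\theta - \phi)$, hence $|\zeta^n - \xi| = 2|\sin(\mu/2)|$ where $\mu = n\theta - \phi + 2\pi j$ for the unique integer $j$ making $|\mu|\leq \pi$. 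The elementary inequality $|\sin(x)| \geq (2/\pi)|x|$ on $[-\pi/2, \pi/2]$ then yields $|\zeta^n - \xi| \geq (2/\pi)|\mu|$, so the main task reduces to lower bounding $|\mu|$.

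Using principal branches $\log \zeta = i\theta$, $\log \xi = i\phi$, $\log(-1) = i\pi$, we recognise $i\mu$ as the value of
\[\Lambda \;=\; n \log \zeta \;-\; \log \xi \;+\; 2j \log(-1),\]
a $\ZZ$-linear form in three logarithms of algebraic numbers of modulus $1$. Observe that $\Lambda = 0$ is equivalent to $\zeta^n = \xi$, which is excluded by hypothesis, so $\Lambda \neq 0$. Moreover, since $|n\theta - \phi| \leq (n+1)\pi$ we have $|j| \leq (n+1)/2 + 1$, so the height $B = \max(n, 1, 2|j|)$ of the integer coefficients of $\Lambda$ is $O(n)$. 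Applying Baker--W\"ustholz then yields an effective bound of the form
\[\log |\Lambda| \;>\; -\,C(3, d)\cdot h'(\zeta)\, h'(\xi)\, h'(-1)\cdot \log B,\]
where $d = [\QQ(\zeta, \xi) : \QQ]$, $h'$ is the modified absolute logarithmic Weil height, and $C(3, d)$ is polynomial in $d$ with absolute coefficients (the number of logarithms being the fixed value $3$).

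It remains to convert this bound to the paper's notation. Standard estimates give $d \leq \deg(\zeta) \deg(\xi) \leq \norm{\zeta}\norm{\xi}$, and $h'(\zeta) \leq h(\zeta) + O(1) \leq O(\log \heig{\zeta} + \log \deg(\zeta)) \leq O(\norm{\zeta})$, with similar bounds for $\xi$ and $h'(-1) = O(1)$. Collecting all factors into a single polynomial in $\norm{\zeta} + \norm{\xi}$ gives $|\Lambda| \geq n^{-(\norm{\zeta} + \norm{\xi})^D}$ for some absolute constant $D$, and hence $|\zeta^n - \xi| \geq (2/\pi) n^{-(\norm{\zeta} + \norm{\xi})^D}$; replacing $D$ by $D+1$ absorbs the $(2/\pi)$ factor for all $n \geq 2$. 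The main obstacle is precisely this bookkeeping step: one must carefully match Baker--W\"ustholz's conventions (degrees over $\QQ$ and logarithmic Weil height) with the paper's encoding length $\norm{\cdot}$ of standard representations, and verify that the resulting exponent on $\norm{\zeta} + \norm{\xi}$ is truly \emph{absolute}, independent of $\zeta$, $\xi$, and $n$. The essential reason this works out is that Baker--W\"ustholz depends on the integer coefficient bound $B$ only through $\log B = O(\log n)$, which is precisely what produces a bound polynomial in $n$ rather than exponential.
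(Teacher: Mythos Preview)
The paper does not actually prove this lemma: it is quoted verbatim from \cite{ouaknine2014ultimate}, with the remark that it ``is itself a consequence of the Baker--W\"ustholz Theorem.'' Your proposal is a correct sketch of exactly that derivation---reducing $|\zeta^n-\xi|$ to a nonzero $\ZZ$-linear form in three logarithms via $|\zeta^n-\xi|=2|\sin(\mu/2)|$, bounding the integer coefficients by $O(n)$, and then tracking the dependence of the Baker--W\"ustholz constants on degree and height through to $\norm{\zeta}+\norm{\xi}$---so there is nothing to compare: you have simply supplied the argument that the paper outsources to its reference.
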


Since $\gamma$ is not a root of unity, it holds that $\gamma^{n_1}\neq \gamma^{n_2}$ for every $n_1\neq n_2\in \NN$. Thus,  there exists a computable $N_1\in\NN$ such that $\gamma^n\notin Z_f$ for every $n>N_1$. 
Moreover, by~\cite[Lemma D.1]{ChonevOW16}, we have that $N_1={\norm{f}}^{\OO(1)}$. By Lemma~\ref{lem:Baker on unit circle}, there exists a constant $D\in \NN$ such that for every $n\ge N_1$ and $1\le j\le 4k$ we have that $|\gamma^n-z_j|>\frac{1}{n^{(\norm{f}^D)}}$ (since $\norm{z_j}+\norm{\gamma}=\OO(\norm{f})$). Intuitively, for $n>N_1$ we have that $\gamma^n$ does not get close to any $z_i$ ``too quickly'' as a function of $n$. In particular, for $n>N_1$ we have $f(\gamma^n)\neq 0$. It thus remains to show that for large enough $n$, $r(n)$ does not affect the sign of $f(\gamma^n)+r(n)$. Intuitively, this is the case because $r(n)$ decreases exponentially, while $|f(\gamma^n)|$ is bounded from below by an inverse polynomial. 

For every $z_j\in Z_f$, let $\phi_j=\arg z_j$, so that $f(z)=0$ iff $g(\phi_j)=0$. We assume w.l.o.g. that $\phi_j\in (-\pi,\pi)$ for every $1\le j\le 4k$. Indeed, if $\phi_j=\pi$ for some $j$, then we can shift the domain of $g$ slightly so that all zeros are in the interior. 

For every $1\le j\le 4k$, let $T_j$ be the Taylor polynomial of $g$ around $\phi_j$ such that the degree $d_j$ of $T_j$ is minimal and $T_j$ is not identically 0.  Thus, we have $T_j(x)=\frac{g^{(d_j)}(\phi_j)}{d_j!}(x-\phi_j)^{d_j}$. 
By Proposition~\ref{prop: derivatives} we have that $d_j\le 4k$ for every $j$. In addition, the description of $T_j$ is computable from that of $\norm{f}$.

By Taylor's inequality, we have that for every $x\in [-\pi,\pi]$ it holds that $|g(x)-T_j(x)|\le \frac{M_j|x-\phi_j|^{d_j+1}}{(d_j+1)!}$ where $M_j=\max_{x\in [-\pi,\pi]}\set{g^{(d_j+1)}(x)}$ (where $g$ is extended naturally to the domain $[-\pi,\pi]$). 
By our description of $g^{(d_j+1)}(x)$, we see that $M_j$ is bounded by $M=4k\max_{1\le i\le k}\set{|\beta_i|}k^{4k+1}$.

Let $\epsilon_1>0$ be such that the following conditions hold for every $1\le j\le 4k$.
\begin{enumerate} 
	\item ${\rm sign}(g'(x))$ does not change in $(\phi_j,\phi_j+\epsilon_1)$ nor in $(\phi_j-\epsilon_1,\phi_j)$.
	\item $|g(x)-T_j(x)|\le \frac{1}{2}|T_j(x)|$ for every $x\in (\phi_j-\epsilon_1,\phi_j+\epsilon_1)$.
	\item ${\rm sign}(g'(x))={\rm sign}(T_j'(x))$ for every $x\in (\phi_j-\epsilon_1,\phi_j+\epsilon_1)$.
\end{enumerate}
Note that we can assume $(\phi_j-\epsilon_1,\phi_j+\epsilon_1)\subseteq (-\pi,\pi)$, since by our assumption $\phi_j\in (-\pi,\pi)$ for all $1\le j\le 4k$.

An $\epsilon_1$ as above exists due to the following properties (see Figure~\ref{fig:functions} for an illustration):
\begin{itemize}
	\item There are only finitely many points where $g'(x)=0$, 
	\item $T_j(x)$ is of degree $d_j$, whereas $|g(x)-T_j(x)|$ is upper-bounded by a polynomial of degree $d_j+1$, and
	\item $T_j'(x)$ is the Taylor polynomial of degree $d_j-1$ of $g'(x)$ around $\phi_j$, so by bounding the distance $|g'(x)-T_j'(x)|$ we can conclude the third requirement.
\end{itemize}

\begin{figure}[ht]
	\centering
	\includegraphics[width=0.8\linewidth]{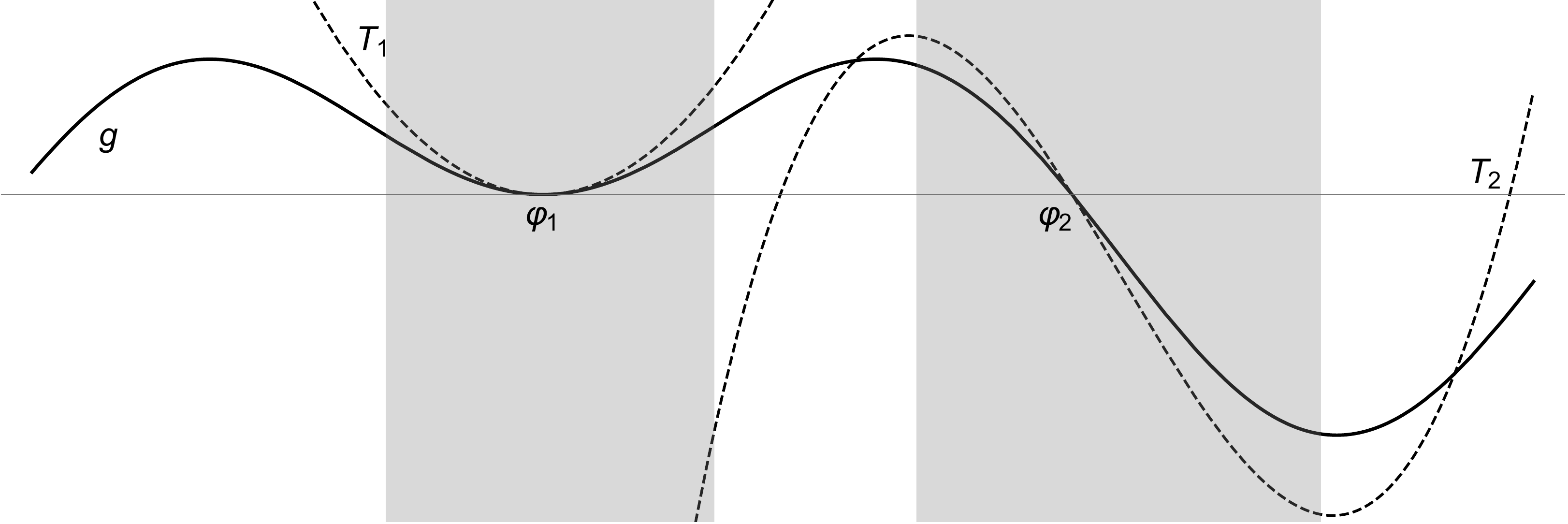}
	\caption{$g(x)$ and two Taylor polynomials: $T_1(x)$ around $\phi_1$ and $T_2(x)$ around $\phi_2$. The shaded regions show where requirements (1)--(3) hold, which determine $\epsilon_1$. Observe that for $T_1$, the most restrictive requirement is $|g(x)-T_1(x)|\le \frac12 T_1(x)$, whereas for $T_2$ the restriction is the requirement that $T_2(x)$ is monotone.}
	\label{fig:functions}
\end{figure}

In order to establish Lemma~\ref{lem:main lemma}, we must be able to effectively compute $\epsilon_1$. We thus proceed with the following lemma.
\begin{lemma}
	\label{lem:epsilon computable}
	$\epsilon_1$ can be computed in polynomial time in $\norm{f}$, and $\frac1{\epsilon_1}=2^{\norm{f}^{\OO(1)}}$.
\end{lemma}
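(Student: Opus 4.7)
My strategy is to treat the three defining requirements of $\epsilon_1$ separately, compute an admissible value for each, and then take the minimum. The key uniform observation is that every quantity appearing in the three conditions --- the angles $\phi_j$ together with $\cos(m\phi_j+\theta_m)$ and $\sin(m\phi_j+\theta_m)$, the derivatives $g^{(i)}(\phi_j)$, the zeros of $g'$, and the uniform bound $M$ --- is either explicit or is an algebraic real with a standard representation of size polynomial in $\norm{f}$, so Mignotte's separation bound~\eqref{eq:Mignotte} will yield an exponential lower bound on any of these that is nonzero.

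First I would set up the algebraic data. By Remark~\ref{rmk: algebraic coefficients in f}, each $z_j=\cos\phi_j+i\sin\phi_j$ has a semialgebraic description of size $\norm{f}^{\OO(1)}$, so $\cos\phi_j$ and $\sin\phi_j$ are algebraic of polynomially bounded description. Expanding the trigonometric identities used in the formulas for $g^{(i)}$ (or equivalently expanding the real part of $z_j^m e^{i\theta_m}$ via de Moivre), it follows that each $\cos(m\phi_j+\theta_m)$ and $\sin(m\phi_j+\theta_m)$ is algebraic of polynomial description length, and hence so is every value $g^{(i)}(\phi_j)$. The uniform bound $M=4k\max_i|\beta_i|\,k^{4k+1}$ is already explicit.

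Next, for condition~(1), I would observe that by the Vandermonde argument underlying Proposition~\ref{prop: derivatives}, $g'$ has only finitely many zeros in $(-\pi,\pi]$, each with a semialgebraic description of polynomial length. I would then set $\epsilon_1^{(1)}$ to be the minimum, over $j$, of the distances from $\phi_j$ to the nearest zero of $g'$ \emph{distinct from $\phi_j$ itself}, and to $\pm\pi$. This minimum is a positive algebraic real of polynomial description length, and Mignotte's bound gives $\epsilon_1^{(1)}\ge 2^{-\norm{f}^{\OO(1)}}$. For condition~(2), Taylor's inequality supplies
\[
|g(x)-T_j(x)|\le \frac{M\,|x-\phi_j|^{d_j+1}}{(d_j+1)!},\qquad |T_j(x)|=\frac{|g^{(d_j)}(\phi_j)|}{d_j!}\,|x-\phi_j|^{d_j},
\]
so $|g-T_j|\le\tfrac12|T_j|$ as soon as $|x-\phi_j|\le \frac{(d_j+1)|g^{(d_j)}(\phi_j)|}{2M}$. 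By minimality of $d_j$ the number $g^{(d_j)}(\phi_j)$ is nonzero, and by Step 1 it is algebraic of polynomial description length, so Mignotte gives $|g^{(d_j)}(\phi_j)|\ge 2^{-\norm{f}^{\OO(1)}}$, producing a computable $\epsilon_1^{(2)}$ of the required size. Condition~(3) is handled identically with $g'$ in place of $g$ and $T_j'$ in place of $T_j$: since $T_j'$ is by construction the non-vanishing leading-order Taylor polynomial of $g'$ around $\phi_j$, the same argument yields a computable $\epsilon_1^{(3)}$. Setting $\epsilon_1:=\min\{\epsilon_1^{(1)},\epsilon_1^{(2)},\epsilon_1^{(3)}\}$ then proves the lemma.

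The most delicate point, and the one I would expect to occupy the proof, is that $\phi_j$ is itself a zero of $g'$ whenever $d_j\ge 2$. In condition~(1) one must therefore isolate $\phi_j$ from the \emph{other} zeros of $g'$ rather than from the zero set as a whole; this means describing the minimum distance by a first-order formula that explicitly excludes $\phi_j$, and then arguing that Mignotte's bound still applies to the resulting positive algebraic quantity. A related subtlety is that $\phi_j$ itself is generically transcendental, so one must be careful to express all derivatives at $\phi_j$ purely in terms of the algebraic numbers $\cos\phi_j$ and $\sin\phi_j$ in order to preserve the polynomial bounds on description length throughout.
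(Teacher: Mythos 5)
Your proof is correct and follows essentially the same strategy as the paper: compute a $\delta_i$ for each of the three conditions and take the minimum, using Mignotte's separation bound for condition~(1) and Taylor-remainder estimates for conditions~(2) and~(3), arriving at the same $\delta_2 = (d_j+1)\,|g^{(d_j)}(\phi_j)|/(2M)$. One small point to tighten: the angular distance $|\phi_j-\psi|$ you lower-bound in condition~(1) is itself generally transcendental (the very issue you flag for $g^{(i)}(\phi_j)$), so one should, as the paper does, apply Mignotte to the algebraic \emph{Euclidean} distance between the corresponding points on the unit circle and then use $|e^{i\phi_j}-e^{i\psi}|\le |\phi_j-\psi|$.
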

\begin{proof}
	We compute $\delta_1,\delta_2,\delta_3$ that satisfy requirements 1,2, and 3, respectively. Then, taking $\epsilon_1=\min\set{\delta_1,\delta_2,\delta_3}$ will conclude the proof.
	
    \subparagraph*{Condition 1:}
    We compute $\delta_1>0$ such that ${\rm sign}(g'(x))$ does not change in $(\phi_j-\delta_1,\phi_j)$ nor in $(\phi_j,\phi_j+\delta_1)$. This is done as follows. 
    Recall that $g(x)=f(e^{ix})=\sum_{m=0}^k \beta_m e^{imx}+\co{\beta_m}\co{e^{imx}}$. It is not hard to check that $g'(x)=\sum_{m=0}^k im\beta_m e^{imx}+\co{im\beta_m} \co{e^{imx}}$. Let $\widehat{f}(z):\CC\to \RR$ be the function $\widehat{f}(z)=\sum_{m=0}^k im\beta_m z+\co{im\beta_m} \co{z}$, then $g'(x)=\widehat{f}(e^{ix})$ and $\norm{\widehat{f}}=\OO(\norm{f})$.
    
    Consider the algebraic set $F=\set{z: |z|=1\wedge \widehat{f}(z)=0}$, then $\set{x: g'(x)=0}=\set{\arg(z):z\in F}$. By similar arguments as those by which we found the roots of $f$ on the unit circle, namely by adapting Proposition~\ref{prop: derivatives} to $\widehat{f}$, we can conclude that $F$ contains at most $4k$ points. Thus, it is enough to set $\delta_1$ such that $\left(\bigcup_{j=1}^{4k} (\phi_j-\delta_1,\phi_j)\cup (\phi_j,\phi_j+\delta_1)\right)\cap F=\emptyset$.

	By~Equation (\ref{eq:Mignotte}), we have that for $z\neq z'\in F$ it holds that $|z-z'|>\frac{\sqrt{6}}{d^\frac{d+1}{2}\cdot H^{d-1}}$ where $d$ and $H$ are the degree and height of the roots of $\widehat{f}(z)$ (see Remark~\ref{rmk: algebraic coefficients in f}). 
	Thus, $1/|z-z'|$ is $2^{\norm{f}^{\OO(1)}}$, and has a polynomial description. Since $|\arg(z)-\arg(z')|>|z-z'|$, we conclude that by setting $\delta_1=\min\set{|z-z'|: z\neq z'\in F}/3$, it holds that $\frac{1}{\delta_1}$ has a polynomial description in $\norm{f}$, and $\delta_1$ satisfies the required condition.
	
	\subparagraph*{Condition 2:} Next, we compute $\delta_2>0$ such that $|g(x)-T_j(x)|\le \frac{1}{2}|T_j(x)|$ for every $x\in (\phi_j-\delta_2,\phi_j+\delta_2)$. Recall that $T_j(x)=\frac{g^{(d_j)}(\phi_j)}{d_j!}(x-\phi_j)^{d_j}$.
	Note that this case is more challenging than Condition 1, as unlike $g(x)=f(e^{ix})$, the polynomial $T_j(x)$ has potentially transcendental coefficients (namely $\phi_j$).  For clarity, we omit the index $j$ in the following. Thus, we write $T,d,\phi$ instead of $T_j,d_j,\phi_j$, etc.
	
	In order to ignore the absolute value, assume  $T(x)\ge g(x) >0$ in an interval $(\phi,\phi+\xi)$ for some $\xi>0$ (the other cases are treated similarly). Then, the inequality above becomes $g(x)-\frac12 T(x)\ge 0$. 			
	Since the degree of $T$ is $d$, then by the definition of $T$, the first $d-1$ derivatives of $g$ in $\phi$ vanish. Define $h(x)=g(x)-\frac12 T(x)$, then we have $h(\phi)=0$, $h'(\phi)=0,\ldots,h^{(d-1)}(\phi)=0$ and $h^{(d)}(\phi)=g^{(d)}(\phi)-\frac12 g^{(d)}(\phi)=\frac12 g^{(d)}(\phi)$. By our assumption, $T(x)\ge \frac12 T(x)$ for $x\in (\phi,\phi+\xi)$, so $h^{(d)}(\phi)>0$.
	In addition, recall that $|h^{(d+1)}(x)|=|g^{(d+1)}(x)|\le M$ for every $x\in [-\pi,\pi]$. Thus, by writing the $d$-th Taylor expansion of $h(x)$ around $\phi$, we have that $h(x)=\frac{h^{(d)}(\phi)}{d!}(x-\phi)^{d}+E(x)$ where $|E(x)|\le \frac{M}{(d+1)!}(x-\phi)^{d+1}$. We now have that 
	\[h(x)\ge \frac12\frac{g^{(d)}(\phi)}{d!}(x-\phi)^d-\frac{M}{(d+1)!}(x-\phi)^{d+1}.\]
	Taking $x\in (\phi,\phi+\frac{g^{(d)}(\phi)(d+1)}{2 M})$, it is easy to check that $h(x)\ge 0$. We can now set $\delta_2=\frac{g^{(d)}(\phi)(d+1)}{2 M}$, which satisfies the required condition.
	
	\subparagraph*{Condition 3:} Finally, we compute $\delta_3>0$ such that ${\rm sign}(g'(x))={\rm sign}(T'_j(x))$ for every $x\in (\phi_j-\delta_3,\phi_j+\delta_3)$. Observe that $T'_j(x)$ is the $d_j-1$-th Taylor polynomial of $g'(x)$ around $\phi^j$. Thus, by following the reasoning used to find $\delta_2$, we can find $\delta_3$ such that $|g'(x)-T'_j(x)|\le \frac12 |T'_j(x)|$ for every $x\in (\phi-\delta_3,\phi+\delta_3)$, and in particular it holds that  ${\rm sign}(g'(x))={\rm sign}(T'_j(x))$ for every $x\in (\phi_j-\delta_3,\phi_j+\delta_3)$.
	
	As mentioned above, by setting $\epsilon_1=\min\set{\delta_1,\delta_2,\delta_3}$, we conclude the proof.
\end{proof}
Conditions 1,2, and 3 above imply that within the intervals $(\phi_j-\epsilon_1,\phi_j+\epsilon_1)$ we have that $|g(x)|\ge \frac{1}{2}|T_j(x)|$, that $g(x)$ and $T_j(x)$ have the same sign, and that they are both decreasing/increasing together.

We now claim that there exists a polynomial $p(n)$ and a number $N_2\in \NN$ such that for every $n>N_2$ it holds that  $|g(\arg(\gamma^n))|>\frac{1}{p(n)}$. In order to compute $p(n)$, we compute separate polynomials for the domain $\bigcup_{j=1}^{4k} (\phi_j-\epsilon_1,\phi_j+\epsilon_1)$ and for its complement. Then, taking their minimum and bounding it from below by another polynomial yields $p(n)$.

We start by considering the case where $\arg(\gamma^n)\in \bigcup_{j=1}^{4k} (\phi_j-\epsilon_1,\phi_j+\epsilon_1)$. Recall that since $\gamma$ is not a root of unity, then for every $n>N_1$ it holds that $\gamma^n\notin Z_f=\set{z_1,\ldots,z_{4k}}$. Then, by Lemma~\ref{lem:Baker on unit circle}, for every  $1\le j\le 4k$ and every $n\ge N_2=\max\set{N_1,2}$ we have $|\gamma^n-z_j|> \frac{1}{n^{(\norm{f}^D)}}$. In addition, $|\gamma^n-z_j|\le |\arg(\gamma^n)-\phi_j|$ (since the LHS is the Euclidean distance and the RHS is the spherical distance). Therefore, $|\arg(\gamma^n)-\phi_j|>\frac{1}{n^{(\norm{f}^D)}}$, so either $\arg(\gamma^n)>\phi_j+\frac{1}{n^{(\norm{f}^D)}}$ or $\arg(\gamma^n)<\phi_j-\frac{1}{n^{(\norm{f}^D)}}$. Next, we have that if $\arg(\gamma^n)\in (\phi_j-\epsilon_1,\phi_j+\epsilon_1)$ for some $1\le j\le 4k$, then $|g(\arg(\gamma^n))|\ge \frac12 |T_j(\arg(\gamma^n))|\ge \frac12\min\set{{|T_j(\phi_j+\frac{1}{n^{(\norm{f}^D)}})|,|T_j(\phi_j-\frac{1}{n^{(\norm{f}^D)}})|}}$, where the last inequality follows from condition 3 above, which implies that $T_j$ is monotone with the same tendency as $g$. 

Observe that $T_j(\phi_j-\frac{1}{n^{(\norm{f}^D)}})=\frac{g^{(d_j)}(\phi)}{d_j!}\frac{1}{n^{(\norm{f}^D)}}$ and that similarly $T_j(\phi_j+\frac{1}{n^{(\norm{f}^D)}})=-\frac{g^{(d_j)}(\phi)}{d_j!}\frac{1}{n^{(\norm{f}^D)}}$ are both inverse polynomials (in $n$).
Thus, $|g(\arg(\gamma^n))|$ is bounded from below by an inverse polynomial. Moreover, these polynomials can be easily computed in time polynomial in $\norm{f}$.

Finally, we note that for $x\notin \bigcup_{j=1}^{4k} (\phi_j-\epsilon_1,\phi_j+\epsilon_1)$ we can compute in polynomial time a bound $B>0$ such that $|g(x)|>B$. Indeed, $B=\min\{|g(x)|: x\in [-\pi,\pi]\setminus\bigcup_{j=1}^{4k} (\phi_j-\epsilon_1,\phi_j+\epsilon_1)\}$ (where $g(-\pi)$ is defined naturally by extending the domain), and we have that $|B|>0$ since we assumed non of the $\phi_j$ are exactly at $\pi$ (in which case we would have had $g(-\pi)=0$). In particular, we can combine the two domains and compute a polynomial $p$ as required.
We remark that we can compute $\norm{B}$ in polynomial time, since it is either at least $\frac12|T_j(\phi_j\pm \epsilon_1)|$ for some $1\le j\le 4k$ (and by Lemma~\ref{lem:epsilon computable}, $\norm{\epsilon_1}$ can be computed in polynomial time), or it is the value of one of the extrema of $g$, and the latter can be computed by finding the extrema of the (algebraic) function $f$ on the unit circle.

To recap, for every $n>N_2$ it holds that $|g(\arg(\gamma^n))|>\frac{1}{p(n)}$ for a non-negative polynomial $p$, and both $N_2$ and $p$ can be computed in polynomial time in the description of the input.

Next, we wish to find $N_3\in \NN$ such that for every $n>N_3$ it holds that $r(n)<\frac{1}{p(n)}$. Recall that 	$r(n)=\sum_{l=1}^{k'} \chi_l \mu_l^n+ \co{\chi_l} \co{\mu_l}^n$ where for every $1\le l\le k'$ we have that $\mu_l$ is algebraic with $\deg(\mu_l)=\norm{f}^{\OO(1)}$ and $H(\mu_l)=2^{\norm{f}^{\OO(1)}}$. Observe that $1-|\mu_l|$ is also an algebraic number. Indeed, $1-|\mu_l|=1-\sqrt{\mu_l\co{\mu_l}}$. Moreover, we get that $\deg(1-|\mu_l|)\le \deg(\mu_l)^4$, as it is the root of a polynomial of degree at most $\deg(\mu_l)^4$, and that $H(1-|\mu_l|)$ is polynomial in $\heig{\mu_l}$.  
Since $|\mu_l|<1$, by applying Equation~\eqref{eq:Mignotte}, we get $1-|\mu_l|=|1-|\mu_l||>\frac{\sqrt{6}}{d^{(d+1)/2} \heig{\mu_l}^{d-1}}$ where $d=\deg(\mu_l)^{\OO(1)}$ and $\heig{\mu_l}=2^{\norm{I}^{\OO(1)}}$. 
It follows that we can compute $\delta\in (0,1)$ with $\frac{1}{\delta}=2^{\norm{I}^{\OO(1)}}$ such that $1-|\mu_l|>\delta$, and hence $|\mu|^n<1-\delta$. Thus, 
\[|r(n)|\le \sum_{l=1}^{k'} 2|\chi_l||\mu_l|^{mn}\le \sum_{l=1}^{k'} 2|\chi_l|(1-\delta)^{mn}\le 2k'\max_{1\le l\le k'}|\chi_l|(1-\delta)^n\]

We can now compute $\epsilon\in (0,1)$ and $N_3\in \NN$ such that:
\begin{enumerate}
	\item $\frac{1}{\epsilon}=2^{\norm{I}^{\OO(1)}}$
	\item $N_3=2^{\norm{I}^{\OO(1)}}$
	\item For every $n>N_3$ it holds that $|r(n)|<(1-\epsilon)^n$
\end{enumerate}

Finally, by taking $N_4\in \NN$ such that $(1-\epsilon)^n<\frac{1}{p(n)}$ (which satisfies $N_4=2^{\norm{I}^{\OO(1)}}$) for all $n>N_4$, we can now conclude that for every $n>\max\set{N_2,N_3,N_4}$, the following hold.
\begin{enumerate}
	\item $f(\gamma^n)=g(\arg(\gamma^n))\neq 0$.
	\item If $f(\gamma^n)>0$, then $g(\arg(\gamma^n))>0$, so $g(\arg(\gamma^n))>\frac{1}{p(n)}$. Since $|r(n)|<\frac{1}{p(n)}$, it follows that $f(\gamma^n)+r(n)=g(\arg(\gamma^n))+r(n)>\frac{1}{p(n)}-|r(n)|>0$. Conversely, if $f(\gamma^n)+r(n)>0$, then $g(\arg(\gamma^n))+r(n)>0$, but since $|g(\arg(\gamma^n))|>\frac{1}{p(n)}$ and $|r(n)|<\frac{1}{p(n)}$, then it must hold that $g(\arg(\gamma^n))>0$, so $f(\gamma^n)>0$.
	\item If $f(\gamma^n)<0$, then $g(\arg(\gamma^n))<0$, so $g(\arg(\gamma^n))<-\frac{1}{p(n)}$. Since $|r(n)|<\frac{1}{p(n)}$, it follows that $f(\gamma^n)+r(n)=g(\arg(\gamma^n))+r(n)<-\frac{1}{p(n)}+|r(n)|<0$. Conversely, if $f(\gamma^n)+r(n)<0$, then $g(\arg(\gamma^n))+r(n)<0$, but since $|g(\arg(\gamma^n))|>\frac{1}{p(n)}$ and $|r(n)|<\frac{1}{p(n)}$, then it must hold that $g(\arg(\gamma^n))<0$, so $f(\gamma^n)<0$.
\end{enumerate}
This concludes the proof of Lemma~\ref{lem:main lemma}.
\qed

We are now ready to use Lemma~\ref{lem:main lemma} in order to solve the systems.

\begin{theorem}
	\label{thm: solve systmes}
	The problem of deciding whether an almost self-conjugate system has a solution is decidable.
\end{theorem}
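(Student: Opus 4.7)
The plan is to reduce the decidability question to the already-analysed setting of Lemma~\ref{lem:main lemma}, and then handle the residual cases either by finite enumeration or by a single call to a decision procedure for the first-order theory of the reals.

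First, after applying the normalisation that produces Equation~\eqref{eq: normalised}, I would split into two cases depending on whether the scaled dominant eigenvalue $\gamma$ is a root of unity. If $\gamma^d = 1$, then as sketched in Section~\ref{sec:root of unity}, I partition the search for $n$ modulo $d$ and defer to the reduction of Appendix~\ref{apx:root of unity}, which turns each residue class into a system in real exponentials that can be decided.

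In the main case, where $\gamma$ is not a root of unity, I apply Lemma~\ref{lem:main lemma} to each conjunct $R_J(A^n s)\sim_J 0$ separately, producing bounds $N_J$, and set $N = \max_J N_J$. I then check, by direct evaluation, whether any $n \in \{0,1,\ldots,N\}$ satisfies the whole system; this handles all ``small'' candidate solutions. For $n > N$, Lemma~\ref{lem:main lemma}(1) guarantees $f_J(\gamma^n) \neq 0$ for every $J$, and parts (2)--(3) guarantee $\mathrm{sign}(f_J(\gamma^n)+r_J(n)) = \mathrm{sign}(f_J(\gamma^n))$. Consequently, any conjunct with $\sim_J$ equal to $=$ cannot be satisfied for $n > N$, so if at least one conjunct is an equality, the answer depends only on the finite enumeration up to $N$ that I have already performed.

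The remaining case is when every conjunct is a strict inequality $R_J(A^n s) > 0$ and no $n \le N$ works. Here the question becomes whether there is some $n > N$ with $f_J(\gamma^n) > 0$ for all $J$ simultaneously. Because $\gamma$ is not a root of unity, the orbit $\{\gamma^n : n > N\}$ is dense on the unit circle, so such an $n$ exists if and only if the semialgebraic set
\[
U \;=\; \{\, z \in \mathbb{C} : |z| = 1 \ \wedge \ \textstyle\bigwedge_J f_J(z) > 0 \,\}
\]
is non-empty. Since $U$ is defined by a quantifier-free formula in the first-order theory of the reals (identifying $\mathbb{C}$ with $\mathbb{R}^2$ and incorporating the algebraic coefficients of the $f_J$ as existentially quantified auxiliary variables, as described in Section~\ref{sec:first order theory}), its emptiness can be decided by Theorem~\ref{thm:renegar existential}. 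Combining the finite check up to $N$ with this single call to Renegar's algorithm yields a decision procedure.

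The only potentially delicate step is the equivalence used in the last case: I need to know that non-emptiness of $U$ implies the existence of arbitrarily large $n$ with $\gamma^n \in U$. This follows because $U$ is open on the unit circle and the orbit of $\gamma$ is equidistributed (in particular dense) by Weyl's theorem, so I can always select $n > N$ landing in $U$. The converse direction is immediate from Lemma~\ref{lem:main lemma}. Thus the main obstacle is not really conceptual but rather bookkeeping the semialgebraic description of $U$ using the algebraic constants appearing in the coefficients, which is handled by the standard encoding discussed in Section~\ref{sec:first order theory}.
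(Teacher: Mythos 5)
Your proof is correct and follows essentially the same route as the paper's: apply Lemma~\ref{lem:main lemma} to each conjunct to obtain a cutoff $N$, enumerate candidate solutions $n \le N$ directly, observe that equality conjuncts cannot hold past $N$, and for the all-strict-inequality case reduce the existence of a large solution to non-emptiness of a semialgebraic subset of the unit circle via density of $\{\gamma^n\}$ and Theorem~\ref{thm:renegar existential}. If anything your version is a touch more careful than the paper's, in that you explicitly restrict the set $U$ to the unit circle (the paper's $\{z : f(z) > 0\}$ is only meaningful when intersected with $|z|=1$, since a non-empty open set in $\mathbb{C}$ need not meet the circle), and you make the handling of the root-of-unity subcase and the mixed equality/inequality case explicit rather than implicit.
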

\begin{proof} Consider an almost self-conjugate system of the form $\bigwedge_{J}R_J(A^ns)\sim_J 0$.
	For each expression $R_J(A^ns)\sim_J 0$, let $f$ be the corresponding dominant function, as per Lemma~\ref{lem:main lemma}, and compute its respective bound $N$. If $\sim_J$ is ``$=$'', then by Lemma~\ref{lem:main lemma}, if the equation is satisfiable for $n\in \NN$, then $n<N$.
	
	If all the $\sim_J$ are ``$>$'', then for each such inequality compute $\set{z: f(z)>0}$, which is a semialgebraic set. If the intersection of these sets is empty, then if $n$ is a solution for the system, it must hold that $n<N$. If the intersection is non-empty, then it is an open set. Since $\gamma$ is not a root of unity, then $\set{\gamma^n:n\in \NN}$ is dense in the unit circle. Thus, there exists $n>N$ such that $\gamma^n$ is in the above intersection, so the system has a solution. Checking the emptiness of the intersection can be done using Theorem~\ref{thm:renegar existential}.
	
	Thus, it remains to check whether there exists a solution $n<N$, which is clearly decidable.	
\end{proof}

Observe that from Theorem~\ref{thm: solve systmes}, combined with Section~\ref{subsec: hitting to system}, we can conclude the decidability of the point-to-semialgebraic Orbit Problem. However, as it turns out, we can reuse Theorem~\ref{thm: solve systmes} to obtain a much stronger result, namely the decidability of the Semialgebraic Orbit Problem.

\section{The Semialgebraic Orbit Problem}
\label{sec: semialgebraic collision}
In~\cite{AOW17}, we proved that the following problem is decidable: given two polytopes $S,T\subseteq \RR^3$ and a matrix $A\in \QQ^{3\times 3}$, does there exist $n\in \NN$ such that $A^nS\cap T\neq \emptyset$. We now show that the techniques developed here can be used as an alternative solution for this problem, and in fact solve a much stronger variant, where $S$ and $T$ are replaced by semialgebraic sets. That is, given two semialgebraic sets $S,T\subseteq \RR^3$ and a matrix $A\in \QQ^{3\times 3}$, does there exist $n\in \NN$ such that $A^nS\cap T\neq \emptyset$.

\begin{theorem}
	The Semialgebraic Orbit Problem is decidable.
\end{theorem}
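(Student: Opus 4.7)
The plan is to reduce the Semialgebraic Orbit Problem to the decidable problem of almost self-conjugate systems (Theorem~\ref{thm: solve systmes}), thereby generalising the point-to-semialgebraic procedure of Section~\ref{subsec: hitting to system} from a singleton source to a semialgebraic source. The extra step, which was not needed before, is to \emph{eliminate} the source variable $x$ before invoking the spectral structure of $A^n$.

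Concretely, let $\Phi_S(x_1,x_2,x_3)$ and $\Phi_T(z_1,z_2,z_3)$ be formulas of the first-order theory of the reals defining $S$ and $T$. Introducing nine fresh variables $Y=(y_{ij})_{1\le i,j\le 3}$, I would consider the parametric formula
\[
\psi(Y)\;=\;\exists x_1,x_2,x_3\,\Bigl(\Phi_S(x_1,x_2,x_3)\wedge \Phi_T\bigl({\textstyle \sum_j y_{1j}x_j,\,\sum_j y_{2j}x_j,\,\sum_j y_{3j}x_j}\bigr)\Bigr),
\]
so that $A^nS\cap T\neq\emptyset$ is equivalent to $\psi(A^n)$ holding. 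Applying Renegar's quantifier elimination (Theorem~\ref{thm:renegar quantifier elimination}), $\psi$ has an equivalent DNF representation $\Psi(Y)=\bigvee_I\bigwedge_J R_{I,J}(Y)\sim_{I,J}0$ with integer-coefficient polynomials $R_{I,J}$ in the nine entries of $Y$.

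It then remains to decide whether some $n\in\NN$ satisfies $\Psi(A^n)$. Assuming $A$ has complex eigenvalues (the real-eigenvalues case being treated as in Appendix~\ref{apx:real eigen}), the spectral calculation of Section~\ref{subsec: hitting to system} writes each $(A^n)_{ij}$ in the form $a_{ij}\lambda^n+\co{a_{ij}}\,\co{\lambda}^n+b_{ij}\rho^n$ with $a_{ij}$ algebraic and $b_{ij}$ real algebraic. Substituting into each $R_{I,J}$ and expanding, one obtains a polynomial in $\lambda^n,\co{\lambda}^n,\rho^n$ with algebraic coefficients; since $R_{I,J}$ has real coefficients and the entries of $A^n$ are real, the resulting expression is invariant under the formal swap $\lambda^n\leftrightarrow\co{\lambda}^n$, which forces monomials to pair up with conjugate coefficients. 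This is precisely the almost self-conjugate property.

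Consequently each conjunction $\bigwedge_J R_{I,J}(A^n)\sim_{I,J}0$ is an almost self-conjugate system, decidable by Theorem~\ref{thm: solve systmes}, and the original question reduces to a finite disjunction of such systems. The principal obstacle in carrying out this plan is the bookkeeping of the substitution step: one must verify that after aggregating like monomials, the specialisation of the quantifier-eliminated polynomials to spectrally expanded entries of $A^n$ genuinely fits the almost self-conjugate template of Section~\ref{subsec: solving the systems}, so that Theorem~\ref{thm: solve systmes} applies directly.
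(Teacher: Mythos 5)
Your proposal is correct and follows the paper's strategy in its essentials: abstract the $n$-dependent data with placeholder variables, eliminate the $\vec{x}$ quantifier via Renegar's procedure (Theorem~\ref{thm:renegar quantifier elimination}), then substitute the spectral expansion back in and observe that the result is an almost self-conjugate system, which is decidable by Theorem~\ref{thm: solve systmes}. The one substantive divergence is the choice of placeholders. The paper works with $(A^{-1})^n$ applied in the description of the source $S$, expands each $R_J((A^{-1})^n\vec{x})$ as a polynomial in $\vec{x}$, and introduces one placeholder per almost self-conjugate coefficient polynomial $Q^J_{s_1,s_2,s_3}(\zeta^n,\co{\zeta}^n,\eta^n)$ --- a number $K$ that grows with the degree of the quantifier-eliminated description of $S$. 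You instead use exactly nine placeholders, one per entry of $A^n$. This is somewhat cleaner: the number of abstracted variables is fixed, and because you never form $A^{-1}$ you need not assume $A$ invertible, so the separate reduction for singular $A$ (Appendix~\ref{apx: singular}) could be dropped. On the bookkeeping concern you flag at the end: the claim that $R_{I,J}(A^n)$, viewed as a polynomial in $\lambda^n,\co{\lambda}^n,\rho^n$, is almost self-conjugate follows because almost self-conjugacy is exactly invariance under the map that conjugates coefficients and swaps $z_1\leftrightarrow z_2$; this map is an $\RR$-algebra endomorphism of the polynomial ring, so almost self-conjugate polynomials are closed under sums, products, and multiplication by real scalars, and hence under substitution into any real-coefficient polynomial such as $R_{I,J}$.
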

\begin{proof}
	Consider semialgebraic sets $S,T\subseteq \RR^3$ and a matrix $A\in \QQ^{3\times 3}$, as described above. Recall that we can write $S=\set{\vec{x}: \bigvee_I \bigwedge_J R_{I,J}(\vec{x})\sim_{I,J}0}$ and similarly for $T$. Since we want to decide whether some point in $S$ hits $T$, we can consider each disjunct in the description of $S$ separately. Thus, we henceforth assume $S=\set{\vec{x}: \bigwedge_J R_{J}(\vec{x})\sim_{J}0}$.
	
	We now turn to characterise the set $A^nS$ for every $n\in \NN$. For this purpose, we assume $A$ is invertible. The case where $A$ is not invertible can be reduced to analysis in a lower dimension, and is handled in Appendix~\ref{apx: singular}.
	For every $n\in \NN$, we now have 
	\[A^nS=\set{A^n \vec{x}: \bigwedge_J R_{J}(\vec{x})\sim_{J}0}=\set{\vec{x}: \bigwedge_J R_{J}((A^{-1})^n\vec{x})\sim_{J}0}.\]
	
	We further assume that $A$ has a complex eigenvalue. As in Section~\ref{sec: semialgebraic hitting}, the case where all eigenvalues are real is simpler (even if $A$ is not diagonalisable), and is handled in Appendix~\ref{apx:real eigen}. We can now write $A=PDP^{-1}$ with 
	$D=\begin{pmatrix}
	\lambda & 0&0\\
	0 & \co{\lambda} &0\\
	0& 0& \rho
	\end{pmatrix}$, where $\lambda$ is a complex eigenvalue, $\rho\in \RR$, and $P$ an invertible matrix. We thus have $A^{-1}=PD^{-1}P^{-1}$ where 
	$D^{-1}=\begin{pmatrix}
	\frac{\co{\lambda}}{|\lambda|^2} & 0 &0\\
	0& \frac{\lambda}{|\lambda|^2} &0\\
	0&0 & \rho^{-1}
	\end{pmatrix}$.
	We denote $\zeta=\frac{\co{\lambda}}{|\lambda|^2}$ and $\eta=\rho^{-1}$, so $D^{-1}=\begin{pmatrix}
	\zeta & 0 &0\\
	0& \co{\zeta} &0\\
	0&0 & \eta
	\end{pmatrix}$. As in Section~\ref{sec: semialgebraic hitting}, by analysing the structure of $P$ and $P^{-1}$, 
	we have that for $\vec{x}=(x_1,x_2,x_3)$,
	$(A^{-1})^n(\vec{x})_i=\sum_{j=1}^3(a_{i,j} \zeta^n+\co{a_{i,j}}\co{\zeta}^n+b_{i,j} \eta^n)x_j$ with $a_{i,j}\in \AA$ and $b_{i,j}\in \AA\cap\RR$. That is, each coordinate $1\le i\le 3$, is a linear combination of $x_1,x_2,x_3$ where the coefficients are of the form above. In particular, the coefficient of every $x_j$ is an almost self-conjugate polynomial (see Appendix~\ref{apx: similarity matrix} for a complete analysis).
	
	Consider a monomial of the form $x_1^{s_1}x_2^{s_2}x_3^{s_3}$ in $R_J(\vec{x})$. Replacing $\vec{x}$ with $(A^{-1})^n\vec{x}$, the monomial then becomes $Q(\zeta^n,\co{\zeta}^n,\eta^n)x_1^{s_1}x_2^{s_2}x_3^{s_3}$, where $Q(z_1,z_2,z_3)$ is an almost self-conjugate polynomial. Indeed, this follows since the coordinates of $(A^{-1})^n\vec{x}$ above are almost self-conjugate, and products of almost self-conjugate polynomials remain almost self-conjugate.
	
	Recall that the polynomials $R_J$ in the description of $S$ have integer (and in particular, real) coefficients. By lifting the discussion about monomials to $R_J$, we can write
	
	\[
	R_J((A^{-1})^n(\vec{x}))=\sum_{0\le s_1,s_2,s_3\le k} Q^J_{s_1,s_2,s_3}(\zeta^n,\co{\zeta}^n,\eta^n)x_1^{s_1}x_2^{s_2}x_3^{s_3}
	\]
	where $k\in \NN$ and the coefficients $Q^J_{s_1,s_2,s_3}$ are almost self-conjugate.
	
	Observe that now, there exists $n\in\NN$ such that $A^nS\cap T\neq \emptyset$ iff there exists $n\in \NN$ and $\vec{x}\in \RR^3$ such that $\vec{x}\in T$ and 
	\begin{equation}
	\label{eq: condition for intersection nonemptiness}
	\bigwedge_{J} \sum_{0\le s_1,s_2,s_3\le k} Q^J_{s_1,s_2,s_3}(\zeta^n,\co{\zeta}^n,\eta^n)x_1^{s_1}x_2^{s_2}x_3^{s_3}\sim_J 0.
	\end{equation}
	Intuitively, we now want to eliminate the quantifiers on $\vec{x}$ in the expression above. However, we cannot readily do so, as the expression is also quantified by $n\in \NN$. Nonetheless, in the following we manage to circumvent this problem by increasing the dimension of the problem.
	
	
%
%
	
	
	Let $K$ be the number of polynomials $Q^J_{s_1,s_2,s_3}$ that appear in the conjunction~\eqref{eq: condition for intersection nonemptiness} above, indexed by $J,s_1,s_2,s_3$. Consider the set
	\[
	U=\set{(y_1,\ldots,y_K)\in \RR^K: \begin{array}{l} \exists \vec{x}\in \RR^3,\ x\in T\wedge \\
		\bigwedge_{J} \sum_{0\le s_1,s_2,s_3\le k} y^J_{s_1,s_2,s_3} x_1^{s_1}x_2^{s_2}x_3^{s_3}\sim_J 0
	\end{array}
	}
	\]
	That is, $U$ is obtained by replacing each polynomial $Q^J_{s_1,s_2,s_3}$ with a ``placeholder'' real variable $y^J_{s_1,s_2,s_3}$.
	$U$ is clearly a semialgebraic set, so by Theorem~\ref{thm:renegar quantifier elimination}, we can eliminate the quantifier on $\vec{x}$, and write
	\[
	U=\set{(y_1,\ldots,y_K)\in \RR^K:\bigwedge_{J}S_J(y_1,\ldots,y_K)\sim_J 0}
	\]
	where $S_J$ are polynomials with integer coefficients.
	It is now the case that there exists $n\in \NN$ such that $A^nS\cap T\neq \emptyset$ iff there exists $n\in \NN$ such that $(Q_1(\zeta^n,\co{\zeta}^n,\eta^n),\ldots,Q_K(\zeta^n,\co{\zeta}^n,\eta^n))\in U$. That is, we need to decide whether there exists $n\in \NN$ such that $S_J(Q_1(\zeta^n,\co{\zeta}^n,\eta^n), \ldots, $ $Q_K(\zeta^n,\co{\zeta}^n,\eta^n))\sim_J 0$ for every $J$.
		
	It is easy to see that since the polynomials $Q_i$ are almost self-conjugate, then so is \(S_J(Q_1(\zeta^n,\co{\zeta}^n,\eta^n),  \ldots ,Q_K(\zeta^n,\co{\zeta}^n,\eta^n)),\) (when viewed as a polynomial in $\zeta^n,\co{\zeta}^n,\eta^n$).
	
	Thus, the conjunction
	\[
	\bigwedge_J S_J(Q_1(\zeta^n,\co{\zeta}^n,\eta^n),  \ldots ,Q_K(\zeta^n,\co{\zeta}^n,\eta^n))
	\]
	is an almost self-conjugate system, and by Theorem~\ref{thm: solve systmes}, it is decidable whether it has a solution. This concludes the proof.
\end{proof}

\section{Discussion}
This paper establishes the decidability of the Semialgebraic Orbit Problem in dimension at most three. The class of semialgebraic sets is arguably the largest natural class for which membership is decidable. Thus, our results reach the limit of what can be decided about the orbit of a single matrix. Moreover, our techniques shed light on the decidability (or hardness) of orbit problems in higher dimensions: the techniques we develop for analysing orbits can be applied to any matrix (in any dimension) whose eigenvalues have arguments that are pairwise linearly dependent over $\QQ$ (i.e., the arguments of all the eigenvalues are rational multiples of some angle $\theta$). Indeed, it is easy to see that the orbits generated by such matrices can be reduced to solving almost self-conjugate systems (see Section~\ref{sec: semialgebraic hitting}).
This can be put in contrast to known hardness results~\cite{chonev2015polyhedron} in dimension $d\ge 4$, which require a single pair of eigenvalues whose arguments do not satisfy the above property. Thus, we significantly sharpen the border of known decidability, and allow future research to focus on hard instances.

Technically, our contribution uncovers two interesting tools. First,
the identification of almost self-conjugate polynomials, and their
amenability to analysis (Section~\ref{sec: semialgebraic hitting}),
and second, the ability to abstract away integral exponents in order to
perform quantifier elimination, by increasing the dimension
(Section~\ref{sec: semialgebraic collision}). The former arises
naturally in the context of matrix exponentiation, while the latter is
an obstacle that is often encountered when quantifying over
semialgebraic sets in the presence of a discrete operator (e.g.,
matrix exponentiation). In the future, we plan to further investigate
the applications of these directions.

\newpage
\bibliography{Main}

\appendix

\section{The case of only real eigenvalues}
\label{apx:real eigen}
In this section we consider the Semialgebraic Orbit Problem in the case where the matrix $A$ has only real eigenvalues, denoted $\rho_1,\rho_2,\rho_3$. In this case, by converting $A$ to Jordan normal form, there exists an invertible matrix $B\in (\AA\cap \RR)^{3\times 3}$ such that one of the following holds:
\begin{enumerate}
	\item $A=B^{-1}\begin{pmatrix}
	\rho_1 & 0 & 0\\
	0 & \rho_2 & 0\\
	0& 0 & \rho_3 \\
	\end{pmatrix}B$, in which case $A^n=B^{-1}\begin{pmatrix}
	\rho_1^n & 0 & 0\\
	0 & \rho_2^n & 0\\
	0& 0 & \rho_3^n \\
	\end{pmatrix}B$.
	\item $A=B^{-1}\begin{pmatrix}
	\rho_1 & 1 & 0\\
	0 & \rho_2 & 0\\
	0& 0 & \rho_3 \\
	\end{pmatrix}B$ with $\rho_1=\rho_2$, in which case $A^n=B^{-1}\begin{pmatrix}
	\rho_1^n & n\rho_1^{n-1} & 0\\
	0 & \rho_1^n & 0\\
	0& 0 & \rho_3^n \\
	\end{pmatrix}B$.
	\item $A=B^{-1}\begin{pmatrix}
	\rho_1 & 1 & 0\\
	0 & \rho_2 & 1\\
	0& 0 & \rho_3 \\
	\end{pmatrix}B$ with $\rho_1=\rho_2=\rho_3$, in which case $A^n=B^{-1}$\\ $\begin{pmatrix}
	\rho_1^n & n\rho_1^{n-1} & \frac12n(n-1)\rho_1^{n-2}\\
	0 & \rho_1^n & n\rho_1^{n-1}\\
	0& 0 & \rho_1^n \\
	\end{pmatrix}B$.
\end{enumerate}

In any of the forms above, we can write 
\[A^n s=\begin{pmatrix}
A_1(n)\rho_1^n+B_1(n)\rho_2^n+C_1(n)\rho_3^n\\
A_2(n)\rho_1^n+B_2(n)\rho_2^n+C_2(n)\rho_3^n\\
A_3(n)\rho_1^n+B_3(n)\rho_2^n+C_3(n)\rho_3^n
\end{pmatrix}\] 
where the $A_i,B_i,$ and $C_i$ are polynomials whose degree is less than the multiplicity of their corresponding eigenvalue.

In Sections~\ref{sec: semialgebraic hitting} and~\ref{sec: semialgebraic collision}, we reduce the problem to finding a solution to an almost self-conjugate system. In the case of real eigenvalues, the notion of almost self-conjugate is meaningless, as there are no complex numbers involved. Thus, following the analysis thereof, and plugging the entries of $A^ns$, we reduce the problem to solving a system of expressions of the form $\bigwedge_J R_J(A^ns)\sim_J 0$, where

\begin{equation}
\label{eq: poly expression real eigen}
R_J(A^n s)= \sum_{0\le p_1,p_2,p_3\le k} \alpha^J_{p_1,p_2,p_3}(n)\rho_1^{p_1 n}\rho_2^{p_2 n}\rho_3^{p_3 n}
\end{equation}
for some $k\in \NN$, and $\alpha^J_{p_1,p_2,p_3}(n)$ are polynomials.

Assuming $\rho_1,\rho_2,\rho_3>0$ (otherwise we can split according to odd and even $n$), for each such expression we can compute a bound $N\in \NN$ based on the rate of growth of the summands, such that either for every $n>N$ the equation holds, or for every $n>N$ it does not hold. 

\section{The case where $\gamma$ is a root of unity}
\label{apx:root of unity}
We assume that $\gamma=\frac{\lambda}{|\lambda|}$ is a root of unity. That is, there exists $d\in \NN$ such that $\gamma^d=1$, so we have that $\set{\gamma^n:n\in \NN}=\set{\gamma^0,\ldots,\gamma^{d-1}}$. 

Let $n\in \NN$ and write $m=(n \!\! \mod d)$. We can now write 
\[A^ns=\begin{pmatrix}
	a_1 |\lambda|^n\gamma^m+\co{a_1}|\lambda|^n\co{\gamma}^m+b_1 \rho^n\\
	a_2 |\lambda|^n\gamma^m+\co{a_2}|\lambda|^n\co{\gamma}^m+b_2 \rho^n\\
	a_3 |\lambda|^n\gamma^m+\co{a_3}|\lambda|^n\co{\gamma}^m+b_3 \rho^n
\end{pmatrix}
=\begin{pmatrix}
2\re(a_1\gamma^m) |\lambda|^n+b_1 \rho^n\\
2\re(a_2\gamma^m) |\lambda|^n+b_2 \rho^n\\
2\re(a_3\gamma^m) |\lambda|^n+b_3 \rho^n
\end{pmatrix}\]
Observe that there exists $n\in\NN$ such that $A^ns\in T$ iff there exists $0\le m\le d-1$ and $r\in \NN\cup\set{0}$ such that $A^{rd+m}s\in T$. We can thus split our analysis according to $m\in \set{0,\ldots,d-1}$. For every such $m$, we need to decide whether there exists $r\in \NN\cup\set{0}$ such that $\begin{pmatrix}
	2\re(a_1\gamma^m) |\lambda|^m(|\lambda|^d)^r+b_1 \rho^m(\rho^d)^r\\
	2\re(a_2\gamma^m) |\lambda|^m(|\lambda|^d)^r+b_2 \rho^m(\rho^d)^r\\
	2\re(a_3\gamma^m) |\lambda|^m(|\lambda|^d)^r+b_3 \rho^m(\rho^d)^r
\end{pmatrix}$
Note that $\gamma^m$, $|\lambda|^m$ and $\rho^m$ are constants. Therefore,
these expressions contain only realalgebraic constants, the system can be viewed as a case handled in the setting of all real eigenvalues. We can thus proceed with the analysis in Section~\ref{apx:real eigen}.

Finally, we remark that $d\le \deg(\gamma)^2$. The proof appears in~\cite{kannan1986polynomial}, and we bring it here for completeness. Since $\gamma$ is a primitive root of unity of order $d$, then the defining polynomial $p_\gamma$ of $\gamma$ is the $d$-th Cyclotomic polynomial, so $\deg(\gamma)=\Phi(d)$, where $\Phi$ is Euler's totient function. Since $\Phi(d)\ge \sqrt{d}$, we get that $d\le \deg(\gamma)^2$. Therefore, the number of cases we consider is polynomial in the original input, and does not involve a blowup in the complexity.

\section{Matrix Forms in Proposition~\ref{prop: derivatives}}
\label{apx: matrix forms}
Recall that we have
\[g^{(i)}(x)=\begin{cases} 
\sum_{m=1}^k m^i 2|\beta_m| \cos(mx+\theta_m)& i\equiv_4 0\\
\sum_{m=1}^k -m^i 2|\beta_m| \sin(mx+\theta_m)& i\equiv_4 1\\
\sum_{m=1}^k -m^i 2|\beta_m| \cos(mx+\theta_m)& i\equiv_4 2\\
\sum_{m=1}^k m^i 2|\beta_m| \sin(mx+\theta_m)& i\equiv_4 3
\end{cases}\]

Writing this in matrix form, split by $i\mod 4$, we have the following.

\begin{align*}
&\mbox{for }i\equiv_4 0:\hspace*{1cm}\begin{pmatrix}
1^4 & 2^4 & \cdots & k^4\\
1^8 & 2^8 & \cdots & k^8\\
\vdots &\vdots& \vdots& \vdots\\
1^{4k} & 2^{4k} & \cdots & k^{4k}
\end{pmatrix}\begin{pmatrix}
2|\beta_1|\cos(x+\theta_1)\\
2|\beta_2|\cos(2x+\theta_2)\\
\vdots\\
2|\beta_k|\cos(kx+\theta_k)
\end{pmatrix}=\begin{pmatrix}
0\\
0\\
\vdots\\
0
\end{pmatrix}\\ 
&\mbox{for }i\equiv_4 1:\hspace*{1cm}\begin{pmatrix}
-1^1 & -2^1 & \cdots & -k^1\\
-1^5 & -2^5 & \cdots & -k^5\\
\vdots &\vdots& \vdots& \vdots\\
-1^{4k-3} & -2^{4k-3} & \cdots & -k^{4k-3}
\end{pmatrix}\begin{pmatrix}
2|\beta_1|\sin(x+\theta_1)\\
2|\beta_2|\sin(2x+\theta_2)\\
\vdots\\
2|\beta_k|\sin(kx+\theta_k)
\end{pmatrix}=\begin{pmatrix}
0\\
0\\
\vdots\\
0
\end{pmatrix}\\ 
&\mbox{for }i\equiv_4 2:\hspace*{1cm}\begin{pmatrix}
-1^2 & -2^2 & \cdots & -k^2\\
-1^6 & -2^6 & \cdots & -k^6\\
\vdots &\vdots& \vdots& \vdots\\
-1^{4k-2} & -2^{4k-2} & \cdots & -k^{4k-2}
\end{pmatrix}\begin{pmatrix}
2|\beta_1|\cos(x+\theta_1)\\
2|\beta_2|\cos(2x+\theta_2)\\
\vdots\\
2|\beta_k|\cos(kx+\theta_k)
\end{pmatrix}=\begin{pmatrix}
0\\
0\\
\vdots\\
0
\end{pmatrix}\\
&\mbox{for }i\equiv_4 3:\hspace*{1cm}\begin{pmatrix}
1^3 & 2^3 & \cdots & k^3\\
1^7 & 2^7 & \cdots & k^7\\
\vdots &\vdots& \vdots& \vdots\\
1^{4k-1} & 2^{4k-1} & \cdots & k^{4k-1}
\end{pmatrix}\begin{pmatrix}
2|\beta_1|\sin(x+\theta_1)\\
2|\beta_2|\sin(2x+\theta_2)\\
\vdots\\
2|\beta_k|\sin(kx+\theta_k)
\end{pmatrix}=\begin{pmatrix}
0\\
0\\
\vdots\\
0
\end{pmatrix}
\end{align*}

\section{The Case where $A$ is Singular}
\newcommand{\zeroB}{\begin{pmatrix}
		0 & 0 \\
		0 & B \\
\end{pmatrix}}
\label{apx: singular}
In this section, we reduce the Semialgebraic Orbit Problem in the case where $A$ is a singular matrix to the case where $A$ is non-singular. Intuitively, we simply cast our analysis to a lower dimension by projecting $A$ on its nonzero eigenvalues. 

In this case, we are given semialgebraic sets $S,T\subseteq\RR^3$ and a matrix $A\in \QQ^{3\times 3}$, where $0$ is an eigenvalue of $A$.

We start with the case where the multiplicity of the eigenvalue $0$ is $1$. Then, we can write $A=P \zeroB P^{-1}$ where $P$ and $B$ are invertible matrices with rational entries.
Indeed, since $0\in \QQ$, then we can decompose $\QQ^3$ as a direct sum $\QQ^3=V_0\oplus V^\bot_0$ where $V_0$ has dimension 1 and $V_0^\bot$ has dimension 2. Let $u\in \QQ^3$ be an eigenvector corresponding to $0$, so that $\textrm{span}(u)=V_0$, and let $v,w\in \QQ^3$ such that $\textrm{span}(v,w)=V_0^\bot$. We now have that $Av,Aw\in V_0^\bot$, so we can write $Av=c_1v+c_2w$ and $Aw=d_1v+d_2 w$ for some $c_1,c_2,d_1,d_2\in \QQ$. Let $P=(u,v,w)$ and $B=\begin{pmatrix}
c_1 & d_1\\
c_2 & d_2
\end{pmatrix}$, then it is easy to verify that $P$ and $B$ are invertible, and that $AP=P\zeroB$, so $A=P\zeroB P^{-1}$, as we wanted.

We now observe the following:
\begin{align*}
&\exists n\in \NN\  \exists x\in S: A^nx\in T & \iff \\
&\exists n\in \NN\ \exists x\in S: P\zeroB P^{-1}x\in T & \iff \\
&\exists n\in \NN\ \exists x'\in P^{-1}S: P\zeroB x'\in P^{-1}T &  \\
\end{align*}
Denote $S'=P^{-1}S$ and $T'=P^{-1}T$, we proceed\footnote{In the following we ignore the case where $n=0$, as this can be checked initially by deciding whether $S\cap T\neq \emptyset$.}:
\begin{align}
\nonumber &\exists n\in \NN\ \exists x'\in S': \zeroB x'\in T' & \iff \\ 
\nonumber &\exists n\in \NN\ \exists x'\in S': \begin{pmatrix} 
0 & 0 \\
0 & B^{n-1} \\
\end{pmatrix}\zeroB x'\in T' & \iff \\ 
&\exists n\in \NN\ \exists x''\in \zeroB S': \begin{pmatrix}
0 & 0 \\
0 & B^{n-1} \\
\end{pmatrix}x''\in T' & \iff 
\label{eq: last condition collision}
\end{align}
Denote $S''=\zeroB S'$, and observe that 
\[
S''=\set{\zeroB \begin{pmatrix}
	y_1\\y_2\\y_3
	\end{pmatrix}: \begin{pmatrix}
	y_1\\y_2\\y_3
	\end{pmatrix}\in S'}=
	\set{\begin{pmatrix}
		0\\B\begin{pmatrix}
		y_2\\y_3
		\end{pmatrix}\end{pmatrix}: 
	\begin{pmatrix}
		0\\y_2\\y_3
	\end{pmatrix}\in S'}=
	\set{\begin{pmatrix}
	0\\z_2\\z_3
	\end{pmatrix}:
	\begin{pmatrix}
	0\\B^{-1}\begin{pmatrix}
	z_2\\z_3
	\end{pmatrix}\end{pmatrix}
	\in S'}
\]
Thus, the vectors in $S''$ have $0$ in their first coordinate. For such vectors, we have the following:
\[
\begin{pmatrix}
0 & 0 \\
0 & B^{n-1} \\
\end{pmatrix}\begin{pmatrix}
0\\z_2\\z_3
\end{pmatrix}\in T' \iff  
B^{n-1}\begin{pmatrix}
z_2\\z_3
\end{pmatrix}\in \set{\begin{pmatrix}
	x_2\\ x_3
	\end{pmatrix}: \begin{pmatrix}
	0\\x_2\\x_3
	\end{pmatrix}\in T'}
\]

Let $S''_2=\set{\begin{pmatrix}
	z_2\\z_3
	\end{pmatrix}:\begin{pmatrix}
	0\\z_2\\z_3
	\end{pmatrix}\in S''}$ and $T'_2=\set{\begin{pmatrix}
	z_2\\z_3
	\end{pmatrix}:\begin{pmatrix}
	0\\z_2\\z_3
	\end{pmatrix}\in T'}$, then the condition in~\eqref{eq: last condition collision} holds iff 
\[
\exists n\in \NN\ \exists \begin{pmatrix}
z_2\\z_3
\end{pmatrix}\in S''_2: B^{n-1} \begin{pmatrix}
z_2\\z_3
\end{pmatrix}\in T'_2
\]

Since $S''_2$ and $T'_2$ are semialgebraic (and are in fact easily computable from $S$ and $T$), we conclude that we can reduce the dimension of the problem.

Next, if the multiplicity of $0$ is $2$, then we can write  $A=P\begin{pmatrix}
0 & 1 & 0\\
0 & 0 & 0\\
0 & 0 & \rho
\end{pmatrix} P^{-1}$ where $\rho$ is a real eigenvalue. Then $A^n=P\begin{pmatrix}
0 & 0 & 0\\
0 & 0 & 0\\
0 & 0 & \rho^n
\end{pmatrix} P^{-1}$ for every $n\ge 2$, and the same approach as above can be taken.

Finally, if the multiplicity of $0$ is $3$, then $A^3=0$, so the problem becomes trivial.

\section{Change of Basis Matrices in the $3\times 3$ case}
\label{apx: similarity matrix}
In this section we consider a diagonalisable matrix $A\in \QQ^{3\times 3}$ with complex eigenvalues. Thus, we can write $A=PDP^{-1}$ with $D=\diag(\lambda,\co{\lambda},\rho)$ with $\lambda\in \AA$ and $\rho\in \AA\cap\RR$.

Note that the columns of the matrix $P$ are eigenvectors of $A$, and moreover, conjugate eigenvalues have conjugate eigenvectors and real eigenvalues have real eigenvectors. We can therefore assume
\[
P=\begin{pmatrix}
a & \co{a} & d\\
b & \co{b} & e\\
c & \co{c} & f
\end{pmatrix}
\]
for $a,b,c\in \AA$ and $d,e,f\in\RR\cap \AA$.

\begin{lemma}
	\label{lem: structure change of basis}
	Let $E=\diag(\delta_1,\delta_2,\delta_3)$ be a diagonal matrix, then every coordinate of $PEP^{-1}$ is of the form
	$\alpha \delta_1+\co{\alpha}\delta_2+\beta\delta_3$, where $\alpha\in \AA$ and $\beta\in \AA\cap \RR$.
\end{lemma}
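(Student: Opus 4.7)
The plan is to first establish that $P^{-1}$ has a conjugate-symmetric structure analogous to that of $P$, and then expand $PEP^{-1}$ entrywise.

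First I would show that
\[
P^{-1} = \begin{pmatrix} a' & b' & c' \\ \co{a'} & \co{b'} & \co{c'} \\ d' & e' & f' \end{pmatrix}
\]
for some $a',b',c' \in \AA$ and $d',e',f' \in \AA \cap \RR$. The cleanest way is to use that the rows of $P^{-1}$ are the dual basis to the columns of $P$: writing $v_1,\co{v_1},v_3$ for the three columns of $P$ (with $v_3$ real) and $w_1,w_2,w_3$ for the rows of $P^{-1}$, we have $w_i \cdot v_j = \delta_{ij}$. Complex-conjugating the three equations $w_1\cdot v_1 = 1,\ w_1\cdot\co{v_1}=0,\ w_1\cdot v_3=0$, and using that $\co{v_3}=v_3$, yields $\co{w_1}\cdot \co{v_1}=1,\ \co{w_1}\cdot v_1=0,\ \co{w_1}\cdot v_3=0$, which is precisely the system characterizing $w_2$. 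Hence $w_2 = \co{w_1}$. The analogous conjugation applied to $w_3$ returns the same system, so $w_3 = \co{w_3}$, i.e.\ $w_3$ is real. Algebraicity of the entries follows from the cofactor formula, since the entries of $P$ are algebraic.

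Second, with this structure in hand I would compute an arbitrary entry of $M := PEP^{-1}$ directly as
\[
M_{ij} = \sum_{k=1}^{3} P_{ik}\,\delta_k\,(P^{-1})_{kj}.
\]
Splitting the sum into the first two terms and the last one, the first two terms equal $P_{i1}(P^{-1})_{1j}\,\delta_1 + P_{i2}(P^{-1})_{2j}\,\delta_2$. By the conjugate structure of both $P$ (first two columns) and $P^{-1}$ (first two rows), we have $P_{i2} = \co{P_{i1}}$ and $(P^{-1})_{2j} = \co{(P^{-1})_{1j}}$, so setting $\alpha := P_{i1}(P^{-1})_{1j} \in \AA$ turns these two terms into $\alpha\delta_1 + \co{\alpha}\delta_2$. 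The third term is $P_{i3}(P^{-1})_{3j}\,\delta_3 = \beta\delta_3$ with $\beta := P_{i3}(P^{-1})_{3j}$, and since both factors are real algebraic we have $\beta \in \AA \cap \RR$. This is exactly the claimed form.

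The only nontrivial step is the conjugate-symmetry of $P^{-1}$; everything after that is direct bookkeeping. An alternative to the duality argument, if one prefers a computational proof, is to expand $P^{-1}$ via the adjugate formula and observe that $\det(P)$ and each of the nine cofactors transform under conjugation in exactly the way needed to swap the first two rows of the adjugate and fix the third, which would yield the same structural conclusion.
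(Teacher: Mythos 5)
Your proof is correct and follows essentially the same route as the paper: establish that $P^{-1}$ inherits the conjugate structure of $P$ (second row the conjugate of the first, third row real) and then read off each entry of $PEP^{-1}$ by pairing the $\delta_1$ and $\delta_2$ contributions. The paper gets the structure of $P^{-1}$ by writing out the adjugate formula explicitly (noting $\det(P)=mi$ with $m$ real) and then verifies a representative entry; your dual-basis/uniqueness argument for the rows of $P^{-1}$ is a cleaner way to obtain the same structure, and your general-entry computation covers all nine entries uniformly rather than demonstrating one.
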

\begin{proof}

The proof is straightforward: we compute the matrix $P^{-1}$, and then the product $PEP^{-1}$.

We leave it to the reader to verify the following: first, the determinant of $P$ is pure-imaginary, i.e., $\det(P)=mi$ for $m\in \RR\cap \AA$. Second, we have
\[
P^{-1}=\frac{1}{mi}\begin{pmatrix}
f \co{b}-e \co{c} & d \co{c}-f \co{a} & e \co{a}-d \co{b} \\
c e-b f & a f-c d & b d-a e \\
b \co{c}-c \co{b} & c \co{a}-a \co{c} & a \co{b}-b \co{a} \\
\end{pmatrix}
\]

Finally, it is very easy (yet tedious) to verify that $PEP^{-1}$ satisfies the claim. We demonstrate by computing the  coordinate $(PEP^{-1})_{1,2}$.

We have that the first row of $PE$ is $(a\delta_1 , \co{a}\delta_2 , d \delta_3 )$, and hence 
\begin{align*}
(PEP^{-1})_{1,2}&=(PE)_{1,1}P^{-1}_{1,2}+(PE)_{1,2}P^{-1}_{2,2}+(PE)_{1,3}P^{-1}_{3,2}\\
&=\frac{1}{mi}\left(a\delta_1( d \co{c}-f \co{a})+ \co{a}\delta_2( a f-c d)+d \delta_3 (c \co{a}-a \co{c})\right)\\
&=\frac{1}{m}\left(-i \delta_1( ad \co{c}-a f\co{a})+i \delta_2(\co{a}c d - a f \co{a})- i \delta_3 (d  c \co{a}-d a \co{c})\right)
\end{align*}
It is now easy to see that the coefficients of $\delta_1$ and $\delta_2$ are conjugates, and the coefficient of $\delta_3$ is real, as desired.
\end{proof}

\section{Bounds on the Description Size of Points in $Z_f$}
\label{apx: bounds on Zf}
We complete the analysis of Remark~\ref{rmk: algebraic coefficients in f}.

Recall that $f(z)=\sum_{m=0}^{k}\beta_m z^m +\co{\beta_m}\co{z}^m$, and $Z_f=\set{z: f(z)=0 \wedge |z|=1}$. Further recall that for every $0\le m\le k$,  $\beta_m$ is a polynomial in $a_1,a_2,a_3,\co{a_1},\co{a_2},\co{a_3},b_1,b_2,b_3$, where all the latter are linear combinations of roots of the characteristic polynomial of $A$, and are therefore algebraic numbers of degree at most $3$ and description polynomial in $\norm{A}+\norm{s}$.

We can now express the condition $f(z)=0$ using a quantified formula in the first order theory of the reals by replacing each of the constants above (i.e. $a_1$, etc.) by their corresponding description, as per Section~\ref{sec:first order theory}. It follows that in this description, there are at most $9$ variables. We now employ the following result due to Renegar~\cite{renegar1992computational}.

\begin{theorem}[Renegar]
	\label{thm: renegar}
	Let $M \in \NN$ be fixed.
	Let $\tau(\mathbf{y})$ be a formula of the first order theory of the reals.
	Assume that the number of (free and bound) variables in $\tau(\mathbf{y})$ is bounded by $M$.
	Denote the degree of $\tau(\mathbf{y})$ by $d$ and the number of atomic predicates in $\tau(\mathbf{y})$ by $n$.
	
	There is a polynomial time (polynomial in $\norm{\tau(\mathbf{y})}$) procedure which computes an equivalent quantifier-free formula
	\begin{eqnarray*}
		\chi(\mathbf{y}) = \bigvee_{i=1}^I \bigwedge_{j=1}^{J_i} h_{i,j}(y) \sim_{i,j} 0
	\end{eqnarray*}
	where each $\sim_{i,j}$ is either $>$ or $=$, with the following properties:
	\begin{enumerate}
		\item Each of $I$ and $J_i$ (for $1\le i\le I$) is bounded by $(n+d)^{O(1)}$.
		\item The degree of $\chi(\mathbf{y})$ is bounded by $(n+d)^{O(1)}$.
		\item The height of $\chi(\mathbf{y})$ is bounded by $2^{\norm{\tau(\mathbf{y})}(n+d)^{O(1)}}$.
	\end{enumerate}
\end{theorem}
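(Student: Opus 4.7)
The statement is a quantitative refinement of Tarski's quantifier-elimination theorem under the assumption that the total variable count $M$ is a fixed constant. My plan is to follow the \emph{critical points method} pioneered by Renegar: process the formula one quantifier block at a time, and within each block reduce elimination to sign determination of a polynomial system at a carefully chosen finite set of sample points.

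First I would reduce to eliminating a single existential block $\exists z_1 \cdots \exists z_\ell\, \psi(\mathbf{y},\mathbf{z})$, where $\psi$ is quantifier-free and involves polynomials $p_1,\ldots,p_n$ in $\mathbf{y}$ and $\mathbf{z}$ of degree at most $d$. Universal blocks are handled by complementation, and since $M$ is fixed the number of alternations is bounded; iterating a block-elimination procedure a constant number of times then yields the general case, with the output of one round fed into the next.

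The heart of the argument is the construction, for each parameter value $\mathbf{y}$, of a set of sample points in $\RR^\ell$ that meets every connected component of the complement of $\bigcup_i \{p_i(\mathbf{y},\cdot)=0\}$, so that the sign vector of $(p_1,\ldots,p_n)$ is constant on each component. I would take, for each subset $\mathcal{I}\subseteq\{1,\ldots,n\}$, the critical points of a generic distance function restricted to the variety $\bigcap_{i\in\mathcal{I}}\{p_i=0\}$; a Bezout-type count shows the total number of such points is bounded by $(nd)^{O(M)} = (n+d)^{O(1)}$. Each sample point is represented symbolically via \emph{Thom's encoding}, that is, by a univariate polynomial together with the sign sequence of its derivatives at the root, avoiding any explicit root isolation. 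Signs of the $p_i$ at each sample point are decided using subresultant (signed remainder) sequences. Eliminating $\mathbf{z}$ then amounts to testing $\psi$ at each sample point and disjoining the surviving sign conditions as polynomial constraints in $\mathbf{y}$, producing the required DNF $\chi(\mathbf{y})$.

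The main obstacle will be controlling the growth of degree, height, and cardinality across successive elimination blocks. A naive treatment squares degrees and multiplies heights at every step, yielding doubly or triply exponential bounds. The fixed-$M$ hypothesis is crucial here: the number of elimination rounds is bounded, so the compounded degree and cardinality remain $(n+d)^{O(1)}$, and Renegar's refined subresultant bookkeeping—tracking the bit complexity of principal subresultant coefficients rather than applying worst-case resultant bounds—shows that coefficient heights in the output grow only to $2^{\norm{\tau(\mathbf{y})}(n+d)^{O(1)}}$. Running the whole symbolic pipeline then takes time polynomial in $\norm{\tau(\mathbf{y})}$ once $M$ is treated as a constant, which is exactly the claim.
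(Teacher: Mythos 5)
The paper does not prove this statement; it is quoted as a known result of Renegar, with a citation, and applied as a black box in Appendix~F, so there is no paper proof to compare your sketch against. Taken on its own, your outline correctly identifies the standard ingredients of Renegar-style quantifier elimination: the critical points method for producing sample points, Thom encodings for symbolic representation of algebraic points, subresultant sequences for sign determination, and the observation that fixing the total variable count $M$ caps the number of elimination rounds so that degree, cardinality, and height growth remain polynomial after compounding.

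There is, however, a gap in the counting step that underpins every quantitative bound. You propose to take critical points ``for each subset $\mathcal{I}\subseteq\{1,\ldots,n\}$'' of the index set and assert a Bezout-type bound of $(nd)^{O(M)}$ on the total number of sample points. As written this ranges over $2^n$ subsets, which is exponential in $n$, and no Bezout count rescues the claimed $(n+d)^{O(1)}$ bound without a further argument. The missing step is the restriction to subsets of size at most $\ell$ (the number of variables in the block being eliminated), justified by a deterministic infinitesimal perturbation that puts the system into general position so that larger subsets have empty common zero set in $\RR^\ell$; one then sums over only $n^{O(\ell)}=n^{O(M)}$ subsets. This perturbation-and-restriction argument is precisely where Renegar and Basu--Pollack--Roy do the real work, and it is load-bearing: the cardinality bound in item~(1), the degree bound in item~(2), the height bound in item~(3), and the claim of polynomial running time all rest on it. Without stating it, the sketch asserts rather than establishes the conclusion.
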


We apply this theorem to the description of $Z_f$ given above, where we identify $\CC$ with $\RR^2$ so that $f$ is indeed a polynomial. Then, we obtain in polynomial time a description of $Z_f$. Moreover, the degrees of the entries is bounded by $\norm{f}^{\OO(1)}$ and their height is bounded by $2^{\norm{f}^{{\OO(1)}}}$.




\end{document}